\begin{document}

\title{Correcting for Interference in Experiments: A Case Study at Douyin}

\author{Vivek F. Farias}
\affiliation{%
  \institution{Massachusetts Institute of Technology}
  \country{USA}
}

\author{Hao Li}
\affiliation{%
  \institution{ByteDance}
  \country{China}}

\author{Tianyi Peng}
\affiliation{%
  \institution{Massachusetts Institute of Technology}
  \country{USA}
}

\author{Xinyuyang Ren}
\affiliation{%
 \institution{ByteDance}
 \country{China}}

\author{Huawei Zhang}
\affiliation{%
  \institution{ByteDance}
  \country{China}}

\author{Andrew Zheng}
\affiliation{%
  \institution{Massachusetts Institute of Technology}
  \country{USA}}


\begin{abstract}
Interference is a ubiquitous problem in experiments conducted on two-sided content marketplaces, such as Douyin (China's analog of TikTok). In many cases, creators are the natural unit of experimentation, but creators interfere with each other through competition for viewers' limited time and attention. ``Naive'' estimators currently used in practice simply ignore the interference, but in doing so incur bias on the order of the treatment effect. We formalize the problem of inference in such experiments as one of policy evaluation. Off-policy estimators, while unbiased, are impractically high variance. We introduce a novel Monte-Carlo estimator, based on ``Differences-in-Qs'' (DQ) techniques, which achieves bias that is second-order in the treatment effect, while remaining sample-efficient to estimate. On the theoretical side, our contribution is to develop a generalized theory of Taylor expansions for policy evaluation, which extends DQ theory to all major MDP formulations. On the practical side, we implement our estimator on Douyin's experimentation platform, and in the process develop DQ into a truly ``plug-and-play'' estimator for interference in real-world settings: one which provides robust, low-bias, low-variance treatment effect estimates; admits computationally cheap, asymptotically exact uncertainty quantification; and reduces MSE by 99\% compared to the best existing alternatives in our applications.

\end{abstract}

%

\keywords{Interference, A/B testing, Experimentation, Off-policy Evaluation, Reinforcement Learning}


\maketitle

  \section{Introduction}
\label{sec:intro}


In recent years, large-scale online platforms have increasingly relied on experimentation to measure the impact of interventions and enhance user experiences \citep{siroker2015b,gilotte2018offline,king2017designing}. Traditional randomized controlled trials (RCTs), commonly known as A/B tests, offer a robust framework for causal inference in settings where treatment and control groups are independent \citep{stolberg2004randomized,deaton2018understanding}. However, in many online platforms, treatment and control outcomes are not independent but interact through a shared system state, a phenomenon known as "Markovian" interference \citep{farias2022markovian}. Addressing this interference is a key challenge for state-of-the-art experimentation platforms \citep{kohavi2020trustworthy,blake2014marketplace,holtz2018limiting,li2022interference}.

In this work, we propose a novel method to address problems of interference that arise in online content marketplaces such as Douyin, a leading social video platform with 600 million daily active users as of early 2021.\footnote{\url{https://new.qq.com/rain/a/20210329A06EP300}} Like its US analog TikTok, Douyin's core product is short-form video content: creators create videos on the app, and viewers are presented with a sequence of these videos (determined by the company's proprietary recommendation algorithm). In a typical viewer session, the viewer opens the app; the platform presents them with a video; the viewer watches this video until they swipe to indicate they would like to advance to the next video; and the platform then presents them with another video. The process repeats until the viewer eventually leaves the platform.

As with many other two-sided marketplaces, a key challenge in experimentation at Douyin is that certain interventions can {\it only} be tested via creator-side experiments, while many key metrics are measured on the viewer side. To give a specific, real example: Douyin has a new feature which will allow creators to deliver livestreams in high definition, and Douyin wants to test the impact of this new feature on viewer engagement. A typical metric will be average ``dwell time'' for a viewer -- that is, the total time an average viewer spends on the platform in some interval.

Because the feature is introduced at the creator level, the platform cannot control which viewers will be exposed, as would be necessary for viewer-side or two-side randomized experiments. Interference between creators occurs primarily because viewers have a limited budget of resources to spend on Douyin -- a budget consisting of free time, attention, battery life, and other factors -- and creators in treatment compete for this budget with those in control. See \cref{fig:tiktok-graphic} for an illustrative example. In such settings, the only existing option deployed at Douyin is naive estimation; in other words, to ignore the interference and proceed with estimation as in a standard RCT. However, this naive estimator is known to be considerably biased, with the bias potentially being as large as the treatment effect itself \citep{blake2014marketplace,holtzLimitingBiasTestControl2020}.

\vspace{0.5em}
\noindent\textbf{Contributions.}
To address the interference problem, we first formulate treatment effect estimation as a problem of off-policy evaluation in a Markovian Decision Process (MDP), where states can be arbitrarily complex -- consisting for example of a viewer's preferences, engagement level, and viewing history. Existing off-policy evaluation methods, despite being unbiased, fail in this setting due to the prohibitively large state space, and are excessively high variance. In contrast, the naive estimator does not require access to the state, but it suffers from a significant bias. To overcome these limitations, we propose a novel estimator that does not require state access and has a second-order bias compared to the naive estimator. Our proposed estimator can be viewed as an instance of Differences-in-Qs (DQ) estimators, proposed in a very recent theoretical work \citep{farias2022markovian} which showed that DQ estimators enjoy a favorable bias-variance tradeoff in average-reward MDPs.

We integrate our proposed DQ estimator into the Douyin platform, incorporating various generalizations, and demonstrate its superior performance compared to existing methods. To summarize, our contributions are threefold:

\textbf{1) A Novel and Practical Estimator.} We propose a novel DQ estimator to correct for interference in A/B tests at Douyin. This estimator does not require access to the state, making it more broadly applicable compared to other off-policy evaluation methods. It has provably second-order bias (\cref{th:total-reward-bias}) and offers practical benefits, as it accommodates heterogeneous users, partially observable states, and multiple simultaneous experiments (\cref{sec:tiktok-extensions}). Additionally, we develop doubly robust techniques to reduce variance (\cref{sec:doubly-robust}) and enable a precise variance quantification (\cref{sec:permutation}).

\textbf{2) Superior Empirical Performance.} We implement our estimator on Douyin's experimentation platform, and report results from a large-scale simulator modeling this implementation in \cref{sec:tiktok-experiments}. These experiments show that the DQ estimator significantly outperforms state-of-the-art approaches, reducing mean squared error (MSE) by 99\% compared to the best existing alternatives. Our work represents the first large-scale implementation of the DQ estimator in the real world.

\textbf{3) A Unified Theory for DQ.} Finally, we provide theoretical extensions to the DQ estimator and its underlying Taylor series theory (\cref{sec:unified-theory}), allowing it to accommodate the discounted and total reward formulations more naturally suited to Douyin's applications. Our unifying theory yields existing average reward results in \cite{farias2022markovian} as a special case, and are of independent interest.
\vspace{-0.5em}
\section{Related Work}
Interference in experiments, across fields such as public medicine, agriculture, and online markets \citep{struchiner1990behaviour,talbot1995effect,lucking1999using}, occurs when the outcome of an experimental unit is affected by others, potentially leading to biases in treatment effect estimation. Interference has emerged as a key challenge in online platforms in particular, across companies including eBay \citep{blake2014marketplace}, Uber/Lyft \citep{chamandyExperimentationRidesharingMarketplace2016,uber2019experimentation}, Airbnb \citep{holtzLimitingBiasTestControl2020}, and Douyin, the setting which motivates this study.

\textbf{Experimental Design.} Existing approaches to interference have primarily focused on sophisticated experimental designs. Examples include minimizing interference through clustering units \citep{vaver2011measuring, walker2014design, pouget2019variance, zigler2021bipartite}, alternating randomization across time periods \citep{bojinov2019time, hohnhold2015focus, bojinov2022design}, conducting randomization on both supply and demand sides in two-sided markets \citep{johari2022experimental, li2022interference, bajari2021multiple}, and other designs \citep{glynn2020adaptive, tchetgen2012causal, johari2017peeking}. Despite their potential, practical limitations such as cost and implementation concerns often restrict the use of such sophisticated designs \citep{kirnChallengesExperimentation2022, kohavi2020trustworthy}, as we will also see in motivating applications at Douyin. In contrast, as opposed to introducing more complex designs, this paper introduces a novel estimator under \textit{simple unit-level randomization}, effectively mitigating bias induced by interference.

\textbf{Off-Policy Evaluation (OPE).} Our problem can be viewed as a case of \emph{off-policy evaluation (OPE)} \citep{precup2001off, sutton2008convergent}, an increasingly important problem in reinforcement learning. Unbiased OPE estimators typically suffer from high variance \citep{thomas2015high, thomas2016data, liu2018breaking}, and the related curse of dimensionality in large state spaces \citep{jiang2016doubly, kallus2020double, kallus2022efficiently}. Our approach can be construed as a {\it biased} method for OPE, which incurs a small amount of bias for a massive reduction in variance (similarly to \citep{schulman2015trust, farias2022markovian}) and offers strong theoretical guarantees without the need to access state information.

  \section{Model}
\label{sec:tiktok-model}

We begin by developing our theory in a simplified setting, where we observe a single trajectory (or ``session'') from \textit{a single viewer}. In \cref{sec:tiktok-extensions}, we will extend this core theory to a much richer problem formulation, capturing the complexities of applications at Douyin.

Consider a scenario in which the platform wants to test a creator-side intervention -- for example, rolling out the HD streaming feature discussed in the introduction. Due to infrastructure constraints or concerns about the impact of the intervention, we must experiment at a creator level, and therefore implement a creator-side A/B test.


We model each session for the viewer as an MDP. At each step $t$, a video is presented to the viewer. With probability $p$, the presented video is in the treatment group, corresponding to an action $a_{t} = 1$; otherwise the video is in the control group, corresponding to an action $a_{t} = 0$. When the viewer swipes to the next video, the platform realizes a reward $r_{t}$ -- e.g., the time the viewer spent watching the previous video -- and the MDP advances to the next step. The reward $r_{t}$ is a (random) function of $a_{t} \in \mathcal{A}$, the action at time $t$, and $s_{t} \in \mathcal{S}$, the viewer's state at time $t$.

Critically, in real applications, the state may be arbitrarily complex, and partially or totally unobservable. For example, $s_{t}$ may consist of the time the viewer has spent on the platform (and implicitly, the remaining time budget of the viewer), their engagement level, and even the whole history of videos the viewer has watched. See \cref{fig:tiktok-graphic} for an example. As we will see, we can design estimators for which this complexity poses no challenge.

\begin{figure}[htbp]
  \centering
  \includegraphics[scale=0.8]{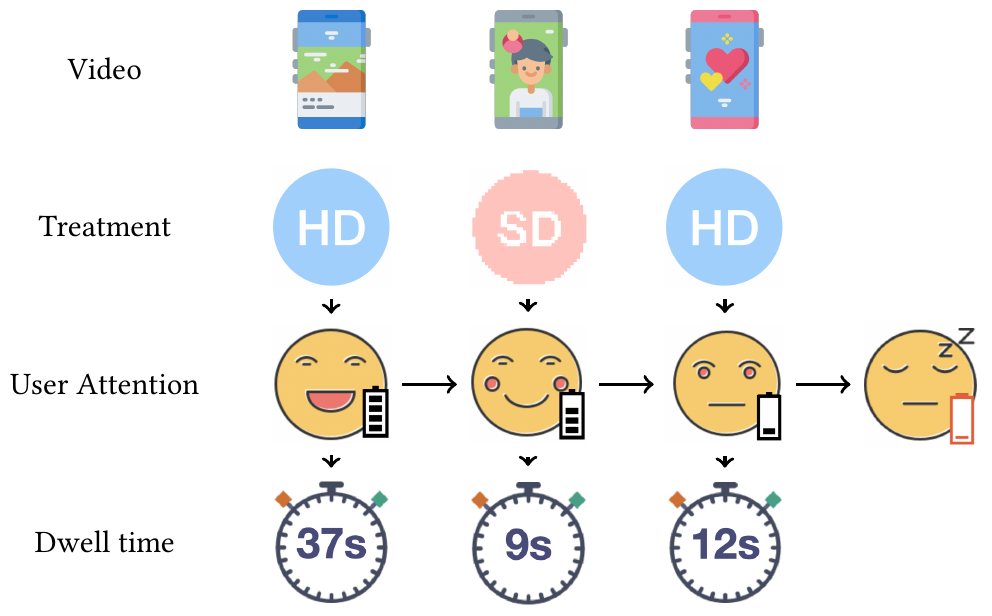}
  \caption{A graphical depiction of the Douyin MDP for a given viewer. At each timestep, the viewer is shown a video; the video is either treatment (high definition, $a_t = 1$) or control (standard definition, $a_t = 0$); and depending on the video and the viewer's current state (consisting of e.g., attention, battery life, data budget, etc.), the platform then collects some reward (i.e., the viewer's dwell time), and moves to the next video. Eventually the viewer runs out of attention and leaves the platform.}
  \label{fig:tiktok-graphic}
\end{figure}

A policy $\pi : \mathcal{S} \rightarrow \mathcal{A}$ is a (random) mapping from states to actions. We are interested in three policies:
\begin{enumerate}
\item {\bf Global treatment} $\pi_{1}$, which chooses $\pi_{1}(s)=1, \forall s \in \mathcal{S}$, i.e., every video presented is treated. The associated transition matrix is denoted as $P_{1} \in \R^{|\mathcal{S}|\times |\mathcal{S}|}$.
\item {\bf Global control} $\pi_{0}$, which chooses $\pi_{0}(s) = 0, \forall s \in \mathcal{S}$, i.e., every video is untreated. The associated transition matrix is denoted as $P_{0} \in \R^{|\mathcal{S}|\times |\mathcal{S}|}$.
\item {\bf The experiment} $\pi_{p}$, which chooses $\pi(s) = 1$ with probability $p$; otherwise $\pi(s)=0$; in other words, video-side A/B testing.  The associated transition matrix is denoted as $P_{p} := (1-p)\cdot P_{0}+ p \cdot P_{1}$. For ease of exposition, we set $p=1/2$ and discuss generalizations in \cref{sec:tiktok-extensions}.
\end{enumerate}
For a given policy $\pi$, we define the total reward of the policy as: $J_{\pi} = \sum_{t=0}^{\infty} \E_{\pi}[r_t ]$, where the notation $\E_{\pi}$ denotes an expectation over states $s_0 \sim \rho_{\rm init}$, actions $a_t | s_t \sim \pi(s_t)$, dynamics $s_{t+1} | s_t \sim P_{\pi}(s_t, s_{t+1})$, and rewards.

\textbf{Goal.}  Given observations $\{(r_{t}, a_{t})\}$ under the experiment policy $\pi_{1/2}$, our goal is to estimate the difference between the total rewards obtained by $\pi_{1}$ and $\pi_0$:
\begin{equation}
  \mathrm{ATE} := J_{\pi_{1}} - J_{\pi_0}.
\end{equation}
In our example above, $\mathrm{ATE}$ corresponds to the impact to the viewer's dwell time in the platform by replacing the policy $\pi_0$ by $\pi_1$, a key quantity of interest when deciding whether rolling out a new policy.

To ensure that $J_{\pi}$ is well-defined, we make one key assumption on the Markov chain induced by each policy, which is that the viewer eventually terminates their session; i.e., all Markov chains are {\it absorbing}. To be precise: we say that $\mathcal{S}_{\rm abs} \subseteq \mathcal{S}$ is an absorbing class under policy $\pi$ if $P_{\pi}(s, s') = \mathbb{I}\left\{s' \in \mathcal{S}_{\rm abs}\right\}$ for all $s \in \mathcal{S}_{\rm abs}$, and if $r_t = 0$ a.s. when $s_t \in \mathcal{S}_{\mathrm{abs}}$. We also require that $\mathcal{S}_{\mathrm{abs}}$ is reached in finite time almost surely starting from any state and define the expected time to this absorption event as $T_{\rm abs}^{\pi} := \max_{s \in \Sscr} \sum_{t=0}^{\infty} \E_{\pi}[\mathbb{I}\left\{s_t \notin \mathcal{S}_{\rm abs}\right\} | s_0 = s]$. We then make the following assumption:
\begin{assumption}
  \label{ass:expt-absorbing}
  There exists a class $\mathcal{S}_{\rm abs} \subseteq \mathcal{S}$ and a time $T_{\rm abs}$, such that $\mathcal{S}_{\rm abs}$ is absorbing and $T_{\rm abs} > T_{\mathrm{abs}}^{\pi}$ for each policy $\pi$ in $\pi_0, \pi_{1}, \pi_{1 / 2}$.
\end{assumption}

  \section{A Novel Estimator for Estimation Under Interference}
\label{sec:tiktok-estimator}

\subsection{A Naive Estimator}
We begin by describing the status quo for this setting: Naive estimation, wherein interference is simply ignored and the experiment is treated as a traditional A/B test. A typical estimator is the inverse propensity weighted estimator
\begin{align}
\label{eq:naive}
\hat{\mathrm{ATE}}_{\rm Naive}  := \sum_{t=0}^{\infty} \left(\frac{1(a_{t}=1)}{\Pr(a_{t}=1)} r_{t} -  \frac{1(a_{t}=0)}{\Pr(a_{t}=0)} r_{t}\right) = \sum_{t=0}^{\infty} 2\cdot \left(1(a_{t}=1) r_{t} - 1(a_{t}=0) r_{t}\right),
\end{align}
which simply estimates the difference between the single-step rewards obtained under treatment and control -- ignoring the impact of each action on the overall trajectory.

However, the bias of this estimator can be significant. To see this, note that in expectation
\begin{align}
\label{eq:naive-expectation}
 \E_{\pi_{1 / 2}} \left[ \hat{\mathrm{ATE}}_{\rm Naive} \right] =   \sum_{t=0}^{\infty} \E_{\pi_{1/2}} \left[ r(s_t, 1) \right] - \E_{\pi_{1/2}}\left[r(s_{t}, 0) \right]
\end{align}
On the other hand, $\mathrm{ATE} = \sum_{t=0}^{\infty} \E_{\pi_1} [ r(s_t, 1)] -  \sum_{t=0}^{\infty} \E_{\pi_0} [ r(s_t, 0)].$ The difference between $\mathrm{ATE}$ and $ \Epi \left[ \hat{\mathrm{ATE}}_{\rm Naive} \right]$ then results from the fact that the expectation we compute by sampling from $\pi_{1 / 2}$ does not match the expectations we would like to compute, under $\pi_{0}, \pi_{1}$.

Intuitively, then, the magnitude of the bias depends on the difference between the dynamics under $\pi_{0}$ and under $\pi_{1}$. We characterize this difference as follows: starting from any state, let $\delta$ bound the distance between distributions over next states. Precisely, $\delta := \max_{s} D_{\mathrm{TV}}(P_0(s, \cdot), P_1(s, \cdot))$ where $D_{\mathrm{TV}}$ measures the total variation distance\footnote{Note that in a typical A/B testing, $P_{0}$ is often close to $P_{1}$ so that $\delta$ is expected to be small.}. Then, the bias of $\Naive$ is $\Theta(\delta)$, whereas the ATE itself is also $\Theta(\delta)$ -- i.e., the bias of Naive estimation is on the same order as the treatment effect. We will demonstrate this in a number of ways: \cref{sec:examples} gives a simple example where the Naive estimator measures an effect which is in fact only cannibalization; \cref{sec:tiktok-experiments} contains realistic experiments in which the bias of Naive is at least 900\% of the treatment effect; and \cref{sec:Taylor-DQ} provides a rigorous bound to this effect.

\subsection{The Differences-in-Qs Estimator}
Intuitively, Naive's bias results from its myopia: Naive simply ignores the impact of the treatment on future rewards. This inspires us to propose a less myopic estimator, which we show here for the special case $p=1/2$\footnote{This greatly simplifies exposition; we discuss the general case in \cref{sec:tiktok-extensions}.}:
\begin{align}
\label{eq:dq}
\hat{\mathrm{ATE}}_{\rm DQ}  := \sum_{t=0}^{\infty} 2\cdot \left(1(a_{t}=1) \left(\sum_{t'=t}^{\infty} r_{t'}\right) -  1(a_{t}=0) \left(\sum_{t'=t}^{\infty} r_{t'}\right) \right),
\end{align}

In other words, we simply replace the rewards $r_{t}$ in \cref{eq:naive} by {\it long-term} accumulated rewards. We refer to this as a "Differences-in-Qs" (DQ) estimator, so called because it estimates the quantity
\begin{equation}
  \label{eq:dq-expectation}
 \Epi \left[ \hat{\mathrm{ATE}}_{\rm DQ} \right] =   \sum_{t=0}^{\infty} \Epi \left[ \Qpi(s_{t}, 1) - \Qpi(s_{t}, 0) \right]
\end{equation}
where $Q_{\pi}(s, a):= \sum_{t=0}^{\infty} E_{\pi}[r_{t}|s_{0}=s, a_{0}=a]$ is the usual state-action value function in MDPs; i.e., the accumulated long-term rewards starting from the state $s$ and the action $a$. This is actually an instantiation of a broad class of DQ estimators, including prior work \cite{farias2022markovian} as special cases, which we discuss in \cref{sec:unified-theory}.

Surprisingly, this simple change yields a dramatic improvement in the bias in estimating $\mathrm{ATE}$. In fact, the bias of \cref{eq:dq} is {\it second-order} in $\delta$; precisely, we have

\begin{theorem}[Bias of DQ]
\label{th:total-reward-bias}
Assume that $\forall s \in \Sscr$, $D_{\rm TV}( P_1(s,\cdot), P_0(s,\cdot) ) \leq \delta$, and let $r_{\max} := \max_{s,a} \left|r(s,a)\right|$. Then,
\[
\left|{\rm ATE} - \Epi\left[
\hat {\rm ATE}_{\rm DQ}
\right]\right| \leq T_{\rm abs}^{3} r_{\max} \cdot \delta^2.
\]
\end{theorem}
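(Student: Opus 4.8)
The plan is to recast total-reward policy evaluation in linear-algebraic form and to exploit the symmetry of the experiment policy about $p=1/2$. Writing $r_a=(r(s,a))_s$, $P_a$ for the transition matrices of global treatment/control, $P_{1/2}=\tfrac12(P_0+P_1)$, $r_{1/2}=\tfrac12(r_0+r_1)$, and $M_\pi=(I-P_\pi)^{-1}$ for the fundamental matrix on the transient states (well-defined since \cref{ass:expt-absorbing} guarantees finite expected absorption), each policy's total reward is $J_\pi=\rho_{\rm init}^\top M_\pi r_\pi$ and its value function is $V_\pi=M_\pi r_\pi$. First I would record two identities: ${\rm ATE}=\rho_{\rm init}^\top(V_{\pi_1}-V_{\pi_0})$, and, via the policy-gradient identity (equivalently, differentiating $J_{\pi_p}=\rho_{\rm init}^\top M_p r_p$ in $p$ using $\tfrac{d}{dp}M_p=M_p\Delta P M_p$ and $\tfrac{d}{dp}r_p=\Delta r$), $\Epi[\tauDK]=\tfrac{d}{dp}J_{\pi_p}\big|_{p=1/2}=\rho_{\rm init}^\top\bigl(M_{1/2}\,\Delta P\,M_{1/2}r_{1/2}+M_{1/2}\,\Delta r\bigr)$, where $\Delta P=P_1-P_0$ and $\Delta r=r_1-r_0$; one checks this reproduces \cref{eq:dq-expectation}. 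I would carry forward that $\max_s\|\Delta P(s,\cdot)\|_1=2\max_s D_{\mathrm{TV}}(P_0(s,\cdot),P_1(s,\cdot))\le 2\delta$ and that the row sums of $M_\pi$ equal expected absorption times, hence are at most $T_{\rm abs}$.

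The heart of the argument is to produce an exact second-order remainder. I would expand $M_1$ and $M_0$ around $M_{1/2}$ using the resolvent identity $M_1=M_{1/2}+\tfrac12 M_{1/2}\Delta P M_1$ and its sign-flipped analogue $M_0=M_{1/2}-\tfrac12 M_{1/2}\Delta P M_0$ (valid since $P_1-P_{1/2}=\tfrac12\Delta P=-(P_0-P_{1/2})$), iterated once to obtain a first-order expansion with an exact remainder of order $(M_{1/2}\Delta P)^2$. Substituting these together with $r_1=r_{1/2}+\tfrac12\Delta r$, $r_0=r_{1/2}-\tfrac12\Delta r$ into ${\rm ATE}=\rho_{\rm init}^\top(M_1 r_1-M_0 r_0)$, the zeroth- and first-order terms sum exactly to $\Epi[\tauDK]$ and are removed upon subtracting, leaving only the remainders $\rho_{\rm init}^\top\bigl[(R_1-R_0)r_{1/2}+\tfrac12(R_1+R_0)\Delta r\bigr]$ with $R_i=\tfrac14(M_{1/2}\Delta P)^2 M_i$. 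The key simplification is the telescoping $(M_1-M_0)r_{1/2}+\tfrac12(M_1+M_0)\Delta r=M_1 r_1-M_0 r_0=V_{\pi_1}-V_{\pi_0}$, which collapses the remainder to the clean exact identity
\[
{\rm ATE}-\Epi[\tauDK]=\tfrac14\,\rho_{\rm init}^\top(M_{1/2}\Delta P)^2(V_{\pi_1}-V_{\pi_0}).
\]

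Finally I would bound this identity factor by factor. Using $\|\rho_{\rm init}^\top M_{1/2}\|_1\le T_{\rm abs}$ (first fundamental-matrix factor), that each right-multiplication by $\Delta P$ obeys $\|v^\top\Delta P\|_1\le 2\delta\,\|v\|_1$, that the interior $M_{1/2}$ contributes another $T_{\rm abs}$, and that $\|V_{\pi_1}-V_{\pi_0}\|_\infty\le T_{\rm abs}r_{\max}$ (any total reward is at most the horizon $T_{\rm abs}$ times $r_{\max}$, giving the third $T_{\rm abs}$ and the $r_{\max}$), the two transition perturbations yield $\delta^2$ and everything multiplies, with the prefactor $\tfrac14$, to exactly $T_{\rm abs}^3 r_{\max}\delta^2$. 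I expect the main obstacle to be the exact-remainder step: one must center the expansion precisely at $p=1/2$ so that the explicit first-order term coincides with the DQ expectation and the even-order error vanishes, and then recognize the telescoping that turns the leftover second-order remainder into the value gap $V_{\pi_1}-V_{\pi_0}$. Once that identity is in hand the norm bookkeeping is routine, though it relies essentially on the total-variation control of $\Delta P$ and on \cref{ass:expt-absorbing} to bound the three fundamental-matrix factors by $T_{\rm abs}^3$.
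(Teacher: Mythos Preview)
Your approach is essentially the paper's resolvent expansion (its Lemmas~1--2) specialized to the absorbing setting and executed symmetrically around $p=1/2$. The paper expands $M_1$ and $M_0$ separately about $M_{1/2}$ and bounds each second-order remainder via \cref{co:approx-error}; you combine the two expansions first, and the telescoping that collapses the leftover into $\tfrac14\,\rho_{\rm init}^\top(M_{1/2}\Delta P)^2(V_{\pi_1}-V_{\pi_0})$ is a genuinely nice bonus --- an exact, interpretable bias identity the paper does not state explicitly. Both routes rest on the same perturbation lemma, so this is a repackaging rather than a different argument.

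One bookkeeping slip: your claim $\|V_{\pi_1}-V_{\pi_0}\|_\infty\le T_{\rm abs}r_{\max}$ only holds if rewards are one-signed. With $r_{\max}=\max_{s,a}|r(s,a)|$ allowing negative rewards, each $|V_{\pi_a}(s)|\le T_{\rm abs}r_{\max}$ gives only $\|V_{\pi_1}-V_{\pi_0}\|_\infty\le 2T_{\rm abs}r_{\max}$, so your final constant comes out as $2T_{\rm abs}^3 r_{\max}\delta^2$ rather than exactly $T_{\rm abs}^3 r_{\max}\delta^2$. The paper recovers the stated constant because the perturbation it feeds into \cref{co:approx-error} is $D_{\rm TV}(P_{1/2}(s,\cdot),P_a(s,\cdot))\le\delta/2$, so each of the two remainders is bounded by $(\tfrac{\delta}{2}T_{\rm abs})^2 T_{\rm abs}r_{\max}$ and their sum by $\tfrac12 T_{\rm abs}^3 r_{\max}\delta^2$. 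If you want the exact constant from your identity, either assume nonnegative rewards, or bound the two pieces $\tfrac14\rho_{\rm init}^\top(M_{1/2}\Delta P)^2 V_{\pi_a}$ separately and note that the resulting $\tfrac14\cdot 2\cdot 4\delta^2 T_{\rm abs}^2\cdot T_{\rm abs}r_{\max}$ still only gives $2T_{\rm abs}^3 r_{\max}\delta^2$; the sharper route is really to track the $\delta/2$ as the paper does. This is a constant-factor detail, not a gap in the method.
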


Reducing bias from $\delta$ to $\delta^2$ produces huge gains in practice: \cref{sec:examples} gives a simple setting in which DQ removes bias entirely, and empirically the bias of DQ is negligible, as we show in \cref{sec:tiktok-experiments}.%

Below we remark on some salient properties of the estimator.

\textbf{DQ is agnostic to state.} Beyond its low bias, the greatest advantage of $\hat{\mathrm{ATE}}_{\rm DQ}$ is its simplicity, in particular the fact that it does not require observations of state $s_{t}$ to implement. This is critical when $s_{t}$ is partially observed or extremely high dimensional, as in many real applications -- in the Douyin setting state consists of a viewer's sentiment, attention, preferences, device status, and any number of other factors, which are only partially observable by proxy and may be totally distinct from session to session. This lack of state also avoids any assumptions on the functional form of $\Qpi$ and avoids any bias introduced by using function approximation or state aggregation to estimate $\Qpi$.

\textbf{Credit Assignment Interpretation.} This estimator also has a surprising and intuitive interpretation as an explicit ``credit assignment'' mechanism. Rewriting \cref{eq:dq-tot-mc} explicitly, we have $\hat{\mathrm{ATE}}_{\rm DQ} = \sum_{t=0}^{\infty}
    2\cdot \left(\#_{t}(1) - \#_{t}(0) \right) r_{t}$
where $\#_{t}(a) = \sum_{t' = 0}^{t} \mathbb{I}\left\{a_{t'} = a\right\}$ is the count of actions $a$ played up to and including time $t$. In effect: we are simply reweighting each reward $r_{t}$ in an intuitive way, by assigning credit to each action according to the relative contribution of that action in realizing $r_{t}$.

\subsection{Examples}
\label{sec:examples}
Finally, we provide some specific examples to build intuition on each of these estimators.

\noindent\textbf{Example 1. (Extreme interference).} Consider a scenario where the viewer will watch control videos for 15 minutes each, and treatment videos for 20 minutes each; i.e., $r(s_{t}, 1) = 20$ min and $r(s_{t}, 0)=15$ min. However, the viewer has a fixed attention budget of 30 minutes, after which they leave the platform, stopping in the middle of a video if necessary. Since the total dwell time is fixed, the true ATE is 0. Upon inspection, we also see that $Q(s,1) = Q(s,0)$ for any state, since the total future dwell time does not depend on actions. As a result, DQ is unbiased:
\begin{align*}
 \E_{\pi_{1/2}} \left[ \hat{\mathrm{ATE}}_{\rm DQ} \right] =   \sum_{t=0}^{\infty} \E_{\pi_{1/2}} \left[ Q(s_{t}, 1) - Q(s_{t}, 0) \right] = 0. 
\end{align*}
On the other hand, one can compute that $\Epi \left[ \hat{\mathrm{ATE}}_{\rm Naive} \right] = \sum_{t=0}^{\infty} \Epi \left[ r(s_t, 1) - r(s_{t}, 0) \right] = 2.5 \text{min}.$ This discrepancy results essentially from myopia: without accounting for the downstream effects (or lack thereof) of each action, the effect measured by Naive turns out to be fully a result of cannibalization.

\noindent \textbf{Example 2. (No interference).} Again, let $r(s_{t}, 1) = 20$ min and $r(s_{t}, 0)=15$ min. Now suppose that a viewer will watch three videos a day, independent of the treatments. There is no interference in this case: each action impacts watch time for a single video, and has no impact on total videos watched; and thus the total effect of treatment is the sum of the single-step effects. In this case ${\mathrm{ATE}} = (20-15)\cdot 3=15$min. The naive estimator is clearly unbiased, while the DQ estimator is unbiased as well in this scenario since for any $t$,
\begin{align*}
 \Epi\left[1(a_{t}=1) \left(\sum_{t'=t}^{\infty} r_{t'}\right) - 1(a_{t}=0)\left(\sum_{t'=t}^{\infty} r_{t'}\right) \right] = \Epi\left[1(a_{t}=1) r_{t} -  1(a_{t}=0) r_{t} \right],
\end{align*}
i.e., $a_{t}$ does not impact the values of $\Epi[r_{t'}]$ for $t'>t.$

  \section{A Unified Theory for DQ}
\label{sec:unified-theory}
While the previous section develops a high-level intuition for DQ, here we provide a rigorous analysis which frames DQ as a first-order approximation to the ATE. This interpretation immediately yields the bias bound \cref{th:total-reward-bias}, as well as a variety of generalizations of the estimator.

More generally, whereas existing theory covers specific reward formulations (specifically average reward \citep{farias2022markovian}), one of our main theoretical contributions is to introduce a {\it unified} framework for Taylor series expansions in all major reward formulations. These include finite horizon with total reward; infinite horizon with discounted reward; total reward in absorbing Markov chains (which models the Douyin application); and average reward in ergodic Markov chains \citep{farias2022markovian}. Our analysis reveals the fundamental role played by the `effective horizon' in various scenarios (see \cref{tab:settings}). In addition, our unifying theory of DQ covers \citep{farias2022markovian} as a special case by adopting a distinct, yet more straightforward proof technique that can be of independent interest. 

\subsection{Background}

We begin by defining precisely each of the Markovian reward settings we address: discounted, average, and total reward settings. In all settings, we consider policies $\pi, \pi^{\prime}$ inducing transition matrices $P_{\pi}, P_{\pi'}$, and reward functions $r_{\pi}, r_{\pi'} : \mathcal{S} \mapsto \mathbb{R}$ where $r_{\pi}(s) = \E_{a \sim \pi(s)}[r(s, a)]$. For each setting and policy, we will define a ``value functional'' $\tilde{\E}_{\pi}$, such that $\tilde{\E}_{\pi} r_{\pi}$ is the objective of interest (either the discounted reward, average reward, or total reward for policy $\pi$) and a corresponding $Q$ function (the `reward-to-go' after taking some action at a certain state); and we make assumptions on $\pi, \pi'$ to ensure that these values are well-defined.

\begin{assumption}[Discounted reward]
  \label{ass:discounted}
  Let $\gamma \in [0, 1)$. For any reward function $r$, define $\tilde{\E}_{\pi} r = \sum_{t=0}^\infty \gamma^{t}\E_{\pi}[r(s_t)]$ and $Q_{\pi}(s, a; r) = \sum_{t=0}^\infty \gamma^{t}\E_{\pi}[r(s_t) | s_0 = s, a_0 = a].$
\end{assumption}

\begin{assumption}[Average reward]
  \label{ass:average}
  Let $P_{\pi}, P_{\pi'}$ be ergodic Markov chains with stationary distributions $\rho_{\pi}, \rho_{\pi'}$, and mixing rate $\max_{s} D_{\rm TV}(P_{\pi}^{k}(s, \cdot), \rho_{\pi}) \leq C \beta^{k}, \forall k \in \mathbb{N}$ for some constants $C$ and $\beta<1$. Let $\tilde{\E}_{\pi} r = \lim_{T \to \infty} \frac{1}{T}\sum_{t=0}^{T-1} \E_{\pi}[r(s_t)]$ and $Q_{\pi}(s, a; r) = \lim_{T\rightarrow \infty} \sum_{t=0}^{T-1} \E_{\pi}[r(s_t) - \tilde{E}_{\pi} r| s_0 = s, a_0 = a].$
\end{assumption}

\begin{assumption}[Finite horizon reward]
  \label{ass:finite}
  Let $H \in \mathbb{N}$ be the horizon, and let $\tilde{\E}_{\pi} r = \sum_{t=0}^{H-1} \E_{\pi}[r(s_t)]$ and $Q_{\pi}(s_t, a_t; r) = \sum_{t'=t}^{H-1} \E_{\pi}[r(s_t') | s_t = s, a_t = a].$
\end{assumption}

\begin{assumption}[Total reward in absorbing MDPs]
  \label{ass:absorbing}
  Let $P_{\pi}, P_{\pi'}$ be Markov chains with absorbing states $\mathcal{S}_{\rm abs} \subseteq \mathcal{S}$ and time to absorption bounded by $T_{\rm abs}$. Let $\tilde{\E}_{\pi} r = \sum_{t=0}^{\infty } \E_{\pi}[r(s_t)]$ and $Q_{\pi}(s, a; r) = \sum_{t=0}^{\infty} \E_{\pi}[r(s_t) | s_0 = s, a_0 = a]$. This corresponds to the setting in \cref{sec:tiktok-model}.
\end{assumption}

\subsection{A Unified Taylor Series Framework}

We can now state our main theorem:
\begin{theorem}
  \label{th:taylor-fundamental}
Let $\pi, \pi'$ satisfy any one of Assumptions~\ref{ass:discounted}, \ref{ass:average}, \ref{ass:finite}, \ref{ass:absorbing}.
   \begin{equation*}
     \tilde{E}_{\pi'} r = \left(\sum_{k=0}^{K} \tilde{E}_{\pi} [\mathrm{DQ}_{\pi, \pi'}^{(k)}(s)]\right)+ \tilde{E}_{\pi'} [\mathrm{DQ}_{\pi, \pi'}^{(K+1)}(s)]
   \end{equation*}
   where ${\rm DQ}_{\pi, \pi'}^{(0)}(s) = r_{\pi'}(s)$ and ${\rm DQ}_{\pi,\pi'}^{(k)}(s) = \E_{a \sim \pi'(s)}[Q_{\pi}(s, a; {\rm DQ}_{\pi,\pi'}^{(k-1)})] - \E_{a \sim \pi(s)}[Q_{\pi}(s, a; {\rm DQ}_{\pi,\pi'}^{(k-1)})]$.
    \end{theorem}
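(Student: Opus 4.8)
The plan is to reduce everything to a single \emph{one-step change-of-policy identity} and then obtain the $K$-term expansion by induction. For an arbitrary state-reward $g:\mathcal S\to\R$, write $V_\pi(s;g):=\E_{a\sim\pi(s)}[Q_\pi(s,a;g)]$ for the induced value function. The first observation I would establish is that, in every one of the four settings, the Bellman/Poisson structure of $Q_\pi$ makes the reward-at-$s$ term cancel when we difference over the first action, so the $k=1$ term is a pure one-step look-ahead difference,
\[
  \mathrm{DQ}^{(1)}_{\pi,\pi'}(s;g)=\big[(P_{\pi'}-P_\pi)\,V_\pi(\cdot;g)\big](s)
\]
(with an extra factor $\gamma$ in the discounted case). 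The crux is then the identity, valid for every such $g$,
\[
  \tilde{E}_{\pi'}\,g=\tilde{E}_{\pi}\,g+\tilde{E}_{\pi'}\big[\mathrm{DQ}^{(1)}_{\pi,\pi'}(\cdot;g)\big].\qquad(\star)
\]

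Granting $(\star)$, the theorem is immediate. Taking $g=\mathrm{DQ}^{(0)}=r_{\pi'}$ gives the $K=0$ case. Since $(\star)$ holds for \emph{arbitrary} $g$, I can apply it with $g=\mathrm{DQ}^{(k)}$; because $\mathrm{DQ}^{(k+1)}=\E_{a\sim\pi'}[Q_\pi(\cdot,a;\mathrm{DQ}^{(k)})]-\E_{a\sim\pi}[Q_\pi(\cdot,a;\mathrm{DQ}^{(k)})]$ is exactly the one-step difference applied to $\mathrm{DQ}^{(k)}$, $(\star)$ reads $\tilde{E}_{\pi'}[\mathrm{DQ}^{(k)}]=\tilde{E}_{\pi}[\mathrm{DQ}^{(k)}]+\tilde{E}_{\pi'}[\mathrm{DQ}^{(k+1)}]$. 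Substituting this repeatedly into the remainder telescopes precisely into $\sum_{k=0}^{K}\tilde{E}_\pi[\mathrm{DQ}^{(k)}]+\tilde{E}_{\pi'}[\mathrm{DQ}^{(K+1)}]$, which is the claim.

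The real work is verifying $(\star)$, and here I would split into two mechanisms. For the discounted, finite-horizon, and absorbing total-reward settings, $\tilde{E}_\pi g=\rho_{\rm init}^\top V_\pi(\cdot;g)$ with $V_\pi(\cdot;g)=M_\pi g$ for a resolvent $M_\pi=(I-\lambda P_\pi)^{-1}$ ($\lambda=\gamma$ for discounted, $\lambda=1$ on the transient states otherwise; the finite-horizon case I would simply embed into the absorbing case by augmenting the state with a clock). In these cases $(\star)$ is nothing but the second resolvent identity $M_{\pi'}-M_\pi=M_{\pi'}\,\lambda(P_{\pi'}-P_\pi)\,M_\pi$, contracted with $\rho_{\rm init}$ on the left and $g$ on the right. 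The average-reward setting is genuinely different and is where I expect the main obstacle: there $\tilde{E}_\pi g=\rho_\pi^\top g$ is a \emph{stationary} average rather than an initial-state value, and $V_\pi(\cdot;g)$ is the differential value solving the Poisson equation $(I-P_\pi)V_\pi(\cdot;g)=g-(\rho_\pi^\top g)\mathbf{1}$. I would prove $(\star)$ by substituting this Poisson equation into $(P_{\pi'}-P_\pi)V_\pi$, rewriting $(P_{\pi'}-P_\pi)=(P_{\pi'}-I)+(I-P_\pi)$, and using stationarity $\rho_{\pi'}^\top P_{\pi'}=\rho_{\pi'}^\top$ to annihilate the $(P_{\pi'}-I)V_\pi$ contribution, which leaves exactly $\rho_{\pi'}^\top g-\rho_\pi^\top g$.

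Finally I would discharge the well-definedness bookkeeping needed to justify $(\star)$ term by term: $(I-\gamma P_\pi)^{-1}$ exists since $\gamma<1$; on the transient states $(I-P_\pi)^{-1}=\sum_t P_\pi^t$ exists because the bounded absorption time $T_{\rm abs}$ forces the transient block to have spectral radius $<1$ (and every $\mathrm{DQ}^{(k)}$ vanishes on $\mathcal S_{\rm abs}$, so only the transient block matters); and the deviation matrix $\sum_t(P_\pi^t-\mathbf{1}\rho_\pi^\top)$ defining $V_\pi$ in the average case converges geometrically by the mixing assumption. This guarantees all the $Q_\pi$, $V_\pi$, and $\mathrm{DQ}^{(k)}$ are finite and that the resolvent/Poisson manipulations above are legitimate.
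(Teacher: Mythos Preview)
Your proposal is correct and structurally mirrors the paper's proof: your one-step identity $(\star)$ is precisely the paper's Lemma~1 (the resolvent perturbation identity $(I-A')^{-1}=(I-A)^{-1}+(I-A')^{-1}(A'-A)(I-A)^{-1}$) contracted with $\rho_{\rm init}$ and $g$ and rewritten in value-function language, and your telescoping induction is exactly the paper's Lemma~2, which obtains the order-$K$ matrix expansion by applying Lemma~1 repeatedly. The one genuine point of departure is the average-reward case: the paper derives it as the limit $\gamma\to 1$ of the discounted expansion, whereas you work directly with the Poisson equation and the stationarity identity $\rho_{\pi'}^\top(P_{\pi'}-I)=0$. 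Your route avoids any limit-interchange bookkeeping and is arguably cleaner for this case; the paper's has the virtue of being a single uniform matrix computation across all four settings. Your embedding of the finite-horizon case into the absorbing case via a clock variable is a minor stylistic choice that the paper handles instead by naming an ``appropriate analog of $\gamma P_\pi$.''
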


    To interpret: \cref{th:taylor-fundamental} provides a $K$-th order approximation to evaluating the policy $\pi'$, using only trajectories generated by $\pi$ -- simply by taking a ``sum''\footnote{Appropriately weighted and normalized.} of $DQ_{\pi, \pi'}^{(k)}(s)$ terms along such trajectories. The theorem also gives an exact remainder to this approximation, which we bound as $O(\delta^{K+1})$ in the next result:
    \begin{corollary}[$K^{\rm th}$ order remainder]
      \label{co:approx-error}
      Let $\pi, \pi'$ satisfy any one of Assumptions~\ref{ass:discounted}, \ref{ass:average}, \ref{ass:finite}, \ref{ass:absorbing}. Further assume that $\max_{s} D_{\mathrm{TV}}(P_{\pi}(s, \cdot), P_{\pi'}(s, \cdot)) \leq \delta$.  Then,
      \begin{equation*}
     \left\lvert \tilde{\E}_{\pi'} r - \sum_{k=0}^{K} \tilde{\E}_{\pi} [\mathrm{DQ}_{\pi, \pi^{\prime}}^{(k)}(s)] \right\rvert \leq (\delta H_{\rm eff})^{K+1} \|\tilde{\E}_{\pi'}\|_{1}r_{\rm max}
      \end{equation*}
      where the effective horizon $H_{\rm eff}$ and scaling constant $\|\tilde{\E}_{\pi'}\|_{1}$ are defined in \cref{tab:settings}, and $r_{\rm max} = \max_{s \in \mathcal{S}} |r(s)|$.
      \end{corollary}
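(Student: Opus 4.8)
The plan is to read the left-hand side of the corollary straight off \cref{th:taylor-fundamental}: rearranging the identity there, the bracketed quantity $\tilde{\E}_{\pi'} r - \sum_{k=0}^{K}\tilde{\E}_{\pi}[\mathrm{DQ}_{\pi,\pi'}^{(k)}(s)]$ is \emph{exactly} the remainder term $\tilde{\E}_{\pi'}[\mathrm{DQ}_{\pi,\pi'}^{(K+1)}(s)]$, so it suffices to bound this single term. To organize the induction, I would package the recursion defining the $\mathrm{DQ}^{(k)}$ as one linear operator on bounded functions $g:\mathcal{S}\to\mathbb{R}$,
\[
(Dg)(s) := \E_{a\sim\pi'(s)}[Q_\pi(s,a;g)] - \E_{a\sim\pi(s)}[Q_\pi(s,a;g)].
\]
A one-line induction then gives $\mathrm{DQ}_{\pi,\pi'}^{(k)} = D^{k} r_{\pi'}$, so the remainder is $\tilde{\E}_{\pi'}[D^{K+1} r_{\pi'}]$ and the whole corollary reduces to (i) a per-application bound $\|Dg\|_\infty \le \delta H_{\rm eff}\,\|g\|_\infty$ and (ii) the final estimate $|\tilde{\E}_{\pi'}h|\le \|\tilde{\E}_{\pi'}\|_1\|h\|_\infty$.

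The crux is the per-application bound, and the key observation is a cancellation. In each of \cref{ass:discounted,ass:average,ass:finite,ass:absorbing} the reward-to-go decomposes as $Q_\pi(s,a;g) = g(s) + w\sum_{s'}P(s'\mid s,a)\,V_\pi(s';g)$, with $w=\gamma$ in the discounted case and $w=1$ otherwise (and an additive constant $-\tilde{\E}_\pi r$ in the average-reward case), where $V_\pi(\cdot;g)$ is the corresponding $\pi$-continuation value. Since $\pi'(\cdot\mid s)$ and $\pi(\cdot\mid s)$ are both probability distributions, the action-independent part $g(s)$ (and the average-reward constant) cancels in the difference, leaving
\[
(Dg)(s) = w\sum_{s'}\bigl(P_{\pi'}(s'\mid s) - P_\pi(s'\mid s)\bigr) V_\pi(s';g).
\]
Now $\mu_s := P_{\pi'}(\cdot\mid s) - P_\pi(\cdot\mid s)$ is a signed measure of zero total mass with $\sum_{s'}|\mu_s(s')| = 2 D_{\rm TV}(P_{\pi'}(s,\cdot),P_\pi(s,\cdot)) \le 2\delta$; for any zero-mass signed measure one has $|\sum_{s'}\mu_s(s')f(s')| \le \tfrac12\bigl(\sum_{s'}|\mu_s(s')|\bigr)\,\mathrm{osc}(f)$, whence $\|Dg\|_\infty \le w\,\delta\,\mathrm{osc}(V_\pi(\cdot;g))$. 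Bounding the oscillation of the continuation value by $\|g\|_\infty$ times the setting-dependent horizon factor then yields $\|Dg\|_\infty \le \delta H_{\rm eff}\|g\|_\infty$ with $H_{\rm eff}$ as tabulated in \cref{tab:settings}.

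With the per-application bound in hand the proof closes quickly: iterating gives $\|D^{K+1}r_{\pi'}\|_\infty \le (\delta H_{\rm eff})^{K+1}\|r_{\pi'}\|_\infty \le (\delta H_{\rm eff})^{K+1} r_{\max}$, since $\|\mathrm{DQ}_{\pi,\pi'}^{(0)}\|_\infty = \|r_{\pi'}\|_\infty \le r_{\max}$; and applying the value functional, whose $\ell_1$ weight is $\|\tilde{\E}_{\pi'}\|_1$ (namely $1/(1-\gamma)$, $H$, $T_{\rm abs}$, or $1$ in the four settings), bounds $|\tilde{\E}_{\pi'}[D^{K+1}r_{\pi'}]|$ by $(\delta H_{\rm eff})^{K+1}\|\tilde{\E}_{\pi'}\|_1 r_{\max}$, exactly as claimed.

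I would expect the main obstacle to be the average-reward case of the oscillation bound. There $V_\pi(\cdot;g)$ is a differential value function defined as a limit of telescoping sums rather than a manifestly convergent discounted series, so obtaining a uniform bound $\mathrm{osc}(V_\pi(\cdot;g)) \le c\,\|g\|_\infty$ with $c = \Theta(C/(1-\beta))$ requires invoking the geometric-mixing hypothesis of \cref{ass:average} and verifying that the subtracted constant $\tilde{\E}_\pi r$ genuinely drops out of $D$. A secondary bookkeeping burden is reconciling the constants $w$, the factor incurred in passing from $\|\cdot\|_\infty$ to $\mathrm{osc}(\cdot)$, and the continuation-value horizon into precisely the $H_{\rm eff}$ and $\|\tilde{\E}_{\pi'}\|_1$ entries of \cref{tab:settings} for all four settings at once, which is where the unified statement does most of its work.
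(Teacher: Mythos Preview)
Your proposal is correct and is essentially the paper's argument recast in functional language: your operator $D$ is exactly the matrix $(A'-A)(I-A)^{-1}$ of \cref{th:taylor-matrix}, so the per-application bound $\|Dg\|_\infty\le \delta H_{\rm eff}\|g\|_\infty$ together with $|\tilde{\E}_{\pi'}h|\le\|\tilde{\E}_{\pi'}\|_1\|h\|_\infty$ is precisely how one bounds the remainder $(I-A')^{-1}[(A'-A)(I-A)^{-1}]^{K+1}r$ in the $\infty$-norm. Your cancellation-plus-oscillation presentation of the $\delta H_{\rm eff}$ step is a clean way to expose why the row sums of $A'-A$ vanishing matters, and your flagged concerns (the average-reward horizon constant, and matching the exact table entries) are indeed where the case-by-case bookkeeping lives.
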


     We note that the scaling constant simply measures how large the value $\tilde{\E}_{\pi}r$ can be, relative to $r_{\max}$.
The combination of \cref{th:taylor-fundamental} and \cref{co:approx-error} offers an elegant, unified theory for DQ estimators. In the following section, we demonstrate how \cref{th:total-reward-bias} can be derived as a specific instance of this theory, and we reserve the proof for \cref{sec:proof-main-theorem}.
    \begin{table}[h]
\begin{tabular}{r|cccc}
    Reward                            & $\tilde{E}_{\pi} r$                                       & $H_{\rm eff}$              & Scaling Const. $\|\tilde{\E}_{\pi'}\|_{1}$ \\
    \hline
Discounted (\ref{ass:discounted})     & $\sum_{t=0}^{\infty} \gamma^{t} \E_{\pi}[r(s_t)]$ & $1 / (1 - \gamma)$              & $1 / (1 - \gamma)$\\
    Average (\ref{ass:average})       & $\lim_{T \to \infty} \frac{1}{T}\sum_{t=0}^{T-1} \E_{\pi}[r(s_t)]$  & $(2 \log C + 1) / (1 - \beta)$ & 1\\
    Finite Horizon (\ref{ass:finite}) & $\sum_{t=0}^{H-1} \E_{\pi}[r(s_t)]$               & $H$                        & $H$\\
    Absorbing (\ref{ass:absorbing})   & $\sum_{t=0}^{\infty } \E_{\pi}[r(s_t)]$      & $T_{\rm abs}$              & $T_{\mathrm{abs}}$
  \end{tabular}
  \caption{Summary of the reward settings in Assumptions~\ref{ass:discounted}, \ref{ass:average}, \ref{ass:finite}, \ref{ass:absorbing}, along with the constants appearing in \cref{co:approx-error}.}
  \label{tab:settings}
    \end{table}
\subsection{From the Taylor expansion to DQ}\label{sec:Taylor-DQ}
Note that \cref{co:approx-error} suggests a recipe for obtaining estimators of the ATE with bias $O(\delta^2)$: we simply estimate the first-order expansion of $J_{\pi_{1}}= \tilde{\E}_{\pi_{1}} r_{\pi_{1}}$, and subtract from it the same approximation of $J_{\pi_0}= \tilde{\E}_{\pi_0}r_{\pi_0} $. The bias results immediately arises from bounding the remainder. For illustrative purposes, we will now execute this recipe in the context of total reward Markovian interference setting described in \cref{sec:tiktok-model}, which implements \cref{ass:absorbing}.

As a warm-up: consider the zeroth-order expansion of \cref{th:taylor-fundamental} (i.e., we set $K = 0$ and omit the remainder). Then, from definitions, we immediately have
\begin{align*}
  \tilde{\E}_{\pi_{1}}[{\rm DQ}_{\pi_{1 / 2}, \pi_1}^{(0)}(s)] -
  \tilde{\E}_{\pi_{0}}[{\rm DQ}_{\pi_{1 / 2}, \pi_0}^{(0)}(s)] = \sum_{t=0}^{\infty} \E_{\pi_{1 / 2}}[r_{\pi_{1}}(s_t) - r_{\pi_0}(s_t)] = \E_{\pi_{1 / 2}}\hat{\mathrm{ATE}}_{\mathrm{Naive}}.
\end{align*}

In other words: the Naive estimator simply estimates the zeroth-order expansion of \cref{th:taylor-fundamental}. \cref{co:approx-error} then immediately yields the $O(\delta)$ bound on bias in general: $\left\lvert\mathrm{ATE} - \hat{\mathrm{ATE}}_{\rm Naive} \right\rvert \leq  \delta T_{\rm abs}^2 r_{\rm max}$.

To reduce bias, we naturally turn to estimating higher order terms. Continuing with this recipe with $K= 1$, we can show via some algebra that in fact
\begin{align*}
  \tilde{\E}_{\pi_{1}}[{\rm DQ}_{\pi_{1 / 2}, \pi_1}^{(0)}(s) + {\rm DQ}_{\pi_{1 / 2}, \pi_1}^{(1)}(s)]
  - \tilde{\E}_{\pi_{0}}[{\rm DQ}_{\pi_{1 / 2}, \pi_0}^{(0)}(s) + {\rm DQ}_{\pi_{1 / 2}, \pi_0}^{(1)}(s)]
  &= \E_{\pi_{1 / 2}}[\hat{\rm ATE}_{\rm DQ}]
\end{align*}


In other words: the DQ estimator simply estimates the first-order expansion of \cref{th:taylor-fundamental}, and as a result the $O(\delta^2)$ bias of \cref{th:total-reward-bias} follows immediately from \cref{co:approx-error}. Thus, we view DQ as the first-order correction to the Naive estimator; its bias properties, and a recipe for DQ estimators in other settings -- including other reward formulations, experiments other than uniform randomization $p=1 / 2$, and in the general problem of off-policy evaluation for arbitrary policies -- follow immediately from this interpretation. Higher-order estimators can also be derived through similar means, although one pays a cost in variance to estimate them.

\subsection{Proof of \cref{th:taylor-fundamental}}\label{sec:proof-main-theorem}

All cases of \cref{th:taylor-fundamental} have elementary proofs, which follow a unified framework. For intuition, consider the discounted reward as defined in \cref{ass:discounted}, $\tilde{E}_{\pi} r$. From the definition of $P_{\pi}$, it is well known that
\begin{equation*}
\tilde{E}_{\pi}r = \sum_{t= 0}^{\infty } \gamma^{t} P_{\pi}^{t} r =\rho_{\rm init}^\top(I - \gamma P_{\pi})^{-1} r
\end{equation*}
where we overload notation slightly to let $r \in \mathbb{R}^{|\mathcal{S}|}$ be the vector induced by the function $r$. Similar forms exist for each of the other settings, replacing $\gamma P_{\pi}$ with an analogous matrix. To understand how $\tilde{E}_{\pi} r$ depends on $\pi$, then, we must understand how the inverse $(I - \gamma P_{\pi})^{-1}$ varies with $\pi$.

So motivated, we start with the following elementary perturbation bound (which we note, applies to $\gamma P_{\pi}$)
\begin{lemma}
  \label{lem:perturbation-fundamental}
  Let $A, A' \in \mathbb{R}^{n \times n}$ be matrices such that $(I - A)^{-1}$ and $(I - A')^{-1}$ exist. Then,
  \begin{equation}\label{eq:perturbation}
(I - A')^{-1} = (I-A)^{-1}  + (I - A')^{-1}(A' - A) (I - A)^{-1}
  \end{equation}
\end{lemma}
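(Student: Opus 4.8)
The plan is to prove this purely algebraically: \cref{eq:perturbation} is a form of the second resolvent identity, asserting that the two resolvents $(I-A')^{-1}$ and $(I-A)^{-1}$ differ by the single correction term $(I-A')^{-1}(A'-A)(I-A)^{-1}$. Since both $(I-A)$ and $(I-A')$ are assumed invertible, no spectral, norm, or smallness assumptions are needed — invertibility is the entire hypothesis — and the result will be an exact identity. I would first rewrite the target in difference form,
\[
(I-A')^{-1} - (I-A)^{-1} = (I-A')^{-1}(A'-A)(I-A)^{-1},
\]
which is equivalent to the statement in the lemma after moving $(I-A)^{-1}$ to the left-hand side.

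The key manipulation is to express both inverses over a common ``sandwich'' $(I-A')^{-1}\,(\,\cdot\,)\,(I-A)^{-1}$. Concretely, I would insert $I = (I-A)(I-A)^{-1}$ on the right of the first term and $I = (I-A')^{-1}(I-A')$ on the left of the second, giving $(I-A')^{-1} = (I-A')^{-1}(I-A)(I-A)^{-1}$ and $(I-A)^{-1} = (I-A')^{-1}(I-A')(I-A)^{-1}$. Subtracting and using $(I-A)-(I-A') = A'-A$ then yields the right-hand side immediately. An equivalent and perhaps cleaner route is clearing denominators: left-multiply the claimed identity by $(I-A')$ and right-multiply by $(I-A)$; the left side collapses to $(I-A)$, while the right side expands to $(I-A') + (A'-A) = I-A$, so the two agree, and invertibility of $(I-A')$ and $(I-A)$ justifies cancelling them to recover \cref{eq:perturbation}.

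There is essentially no obstacle here; the only point requiring care is non-commutativity. The three factors $(I-A')^{-1}$, $(A'-A)$, and $(I-A)^{-1}$ must be kept in the stated left-to-right order throughout, and the cancellations $(I-A')(I-A')^{-1} = I$ and $(I-A)^{-1}(I-A) = I$ must be applied on the correct sides. Because the argument never expands an inverse as a Neumann series, the identity holds verbatim for arbitrary matrices (indeed arbitrary bounded operators) for which the two inverses exist, which is exactly the generality needed to later instantiate it with $A = \gamma P_\pi$ and $A' = \gamma P_{\pi'}$ (and the analogous matrices in the other reward settings).
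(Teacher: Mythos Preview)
Your proposal is correct and essentially identical to the paper's proof. The paper's argument is the one-line observation that $I - A = (I - A') + (A' - A)$, followed by left-multiplying by $(I-A')^{-1}$ and right-multiplying by $(I-A)^{-1}$; your ``clearing denominators'' route is exactly this read in reverse, and your ``sandwich'' route is the same identity expanded term by term.
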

\begin{proof} Observe that $I - A = I - A' + A' - A$. Right-multiplying by $(I-A)^{-1}$ and left-multiplying by $(I-A')^{-1}$, we obtain the results. 
\end{proof}
This immediately leads to a series expansion for the matrix $(I - A)^{-1}$:\footnote{To see this as a ``Taylor'' expansion: suppose we let $A' = A + \delta D$ for some direction $D$. Then, \cref{th:taylor-matrix} immediately yields a Taylor series of the function $f(\delta) = (I-A')^{-1}$ in $\delta$, around $\delta=0$.}
\begin{lemma}
  \label{th:taylor-matrix}
  Let $A, A' \in \mathbb{R}^{n \times n}$ be matrices such that $(I - A)^{-1}$ and $(I - A')^{-1}$ exist.  Then,
  \begin{equation*}
(I - A')^{-1} = (I-A)^{-1} \sum_{k=0}^{K} \left[ (A' - A) (I - A)^{-1}
\right]^{k} + (I - A')^{-1} \left[(A' - A) (I - A)^{-1} \right]^{K+1}
  \end{equation*}
\end{lemma}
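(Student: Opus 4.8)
The plan is to prove this by induction on $K$, treating \cref{lem:perturbation-fundamental} both as the base case and as the engine of the inductive step. To streamline the notation I would write $C := (I-A)^{-1}$ and $B := (A'-A)(I-A)^{-1}$, so that \cref{lem:perturbation-fundamental} reads compactly as
\begin{equation}\label{eq:base-rewrite}
(I-A')^{-1} = C + (I-A')^{-1} B,
\end{equation}
while the claim to be established becomes $(I-A')^{-1} = C\sum_{k=0}^{K} B^{k} + (I-A')^{-1}B^{K+1}$.

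The base case $K=0$ is precisely \cref{eq:base-rewrite}. For the inductive step I would assume the identity at level $K-1$, namely $(I-A')^{-1} = C\sum_{k=0}^{K-1} B^{k} + (I-A')^{-1}B^{K}$, and then substitute \cref{eq:base-rewrite} into the single factor $(I-A')^{-1}$ standing in front of $B^{K}$. This replaces $(I-A')^{-1}B^{K}$ by $CB^{K} + (I-A')^{-1}B^{K+1}$; absorbing the new $CB^{K}$ into the sum advances the upper index from $K-1$ to $K$ and produces exactly the desired expression. The whole step is one substitution followed by a reindexing of the geometric-type sum.

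Alternatively, I could give a non-inductive derivation by first recording the factorization $I - A' = (I-B)(I-A)$, which one checks by expanding the right-hand side. Since $(I-A')^{-1}$ and $(I-A)^{-1}$ both exist by hypothesis, $(I-B) = (I-A')(I-A)^{-1}$ is invertible and $(I-A')^{-1} = C(I-B)^{-1}$. Left-multiplying the elementary finite-geometric identity $I = (I-B)\sum_{k=0}^{K} B^{k} + B^{K+1}$ first by $(I-B)^{-1}$ and then by $C$ yields the claim directly.

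There is essentially no hard step here: both routes reduce to elementary matrix algebra, and the footnote in the excerpt already signals that this is meant to follow from the perturbation bound. The only point requiring care is bookkeeping, namely confirming at each stage that every inverse appearing is one of the two that are guaranteed to exist by hypothesis, so that no unjustified invertibility is silently assumed. In this respect the induction route is the cleaner of the two, since it only ever manipulates $(I-A')^{-1}$ and $(I-A)^{-1}$ and never needs $(I-B)^{-1}$ as a separate object; I would therefore present the induction as the main proof and relegate the factorization argument to a remark.
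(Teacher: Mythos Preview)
Your induction argument is correct and is exactly the formalization of the paper's one-line proof, which simply says to apply \cref{lem:perturbation-fundamental} to $(I-A')^{-1}$ repeatedly up to $K$ times. The alternative factorization route you sketch is a nice variant but unnecessary here.
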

\begin{proof}
The proof follows simply from applying \cref{lem:perturbation-fundamental} to $(I- A')^{-1}$ repeatedly up to $K$ times.
\end{proof}

  In the discounted case, \cref{th:taylor-fundamental} then follows immediately from \cref{th:taylor-matrix} by letting $A = \gamma P_{\pi}$ and $A' = \gamma P_{\pi'}$. The other total reward cases follow from the same proof by choosing appropriate analogs of $\gamma P_{\pi}$; whereas the average reward case can be shown as a limiting regime where $\gamma \to 1$. We provide these proofs in full in the Appendix.

  \newcommand{\ATEDQ}{\hat{\rm ATE}_{\rm DQ}}

\section{Implementing DQ at Douyin}
\label{sec:tiktok-extensions}

Here, we expand the model to allow for much more of the complexity present in real applications, in particular in content marketplaces like Douyin/TikTok. In particular, the real world introduces two main challenges not yet discussed: first, as opposed to estimating the treatment effect on a single viewer, we are interested in the aggregate effect across all viewers; and second, as opposed to assigning treatments independently for each viewer for each video, treatments are instead assigned once for each {\it creator}, and this assignment propagates to all viewers.

First, we show that the estimators and theory from \cref{sec:tiktok-model} continue to work in this setting, with slight modification. Second, we will augment these estimators with additional techniques to fulfill practical requirements: variance reduction, tight variance estimation, and fast implementations. Ultimately, the estimator we describe will be one that, out of the box, provides highly accurate and low variance treatment effect estimates under interference, that satisfy the constraints of real applications.

\subsection{A Richer Model of Content Platforms}
We begin by detailing the precise model we consider. As mentioned, the total reward formulation of \cref{sec:tiktok-model} models the trajectory of a single viewer session, whereas at the platform level, we are interested in an aggregate effect across all viewers. Note also that each viewer session may be completely heterogeneous. In other words, each viewer session $i \in [N]$ may have a completely distinct reward function $r^{(i)}$, transition kernel $P^{(i)}$, and total reward $J_{\pi}^{(i)}$. To model this, we will assume that $r, P, J$ are drawn iid from some unknown distribution $\mathcal{M}$ over MDPs; and for each such draw, we observe an entire trajectory $\{s_{it}, a_{it}, r_{it}\}_{t \geq 0}$. Our problem is then, given $N$ such trajectories (i.e., viewer sessions), to estimate the expected ATE under $\mathcal{M}$:
\begin{equation}
  \label{eq:ate-meta}
  \overline{\rm ATE} = \E_{\mathcal{M}}[J_{\pi_{1}} - J_{\pi_{0}}]
\end{equation}
Second, the treatment assignment occurs not independently across viewers and timesteps, but rather once for each creator; and any viewer who encounters that creator, each time they encounter that creator, will receive the same treatment. More formally, there exists a population of creators indexed by $j \in [M]$, each with a random treatment assignment $a(j) \sim \pi_{p}$. Viewer $i$ at timestep $t$ is shown video $j_{it}$, and the action played at that timestep is then $a_{it} = a(j_{it})$.

Finally, data collection is typically much more complicated in the real world -- there may be multiple treatment groups, and the randomization probability $p$ is typically much smaller than $1 / 2$. Casting the problem as one of off-policy evaluation allows us to extend the framework naturally to such cases -- the data collection policy can essentially be arbitrary. We defer the discussion to \cref{sec:data-collection}.


\subsection{Monte-Carlo Estimation At Douyin}
To be precise, we begin by discussing a simple Monte-Carlo estimator which can be applied to this more general problem. Recall from \cref{sec:tiktok-estimator} that this the estimator \cref{eq:dq} is for a single trajectory. This then suggests a straightforward extension to estimate the meta-treatment effect \cref{eq:ate-meta}, by simply aggregating DQ estimators across trajectories. Let $\hat{\rm ATE}_{\rm DQ}^{(i)} = \sum_{t=0}^{\infty} \left[\hat{Q}^{\rm MC}_{\pi_{1 / 2}}(s_{it}, 1) - \hat{Q}^{\rm MC}_{\pi_{1 / 2}}(s_{it}, 0) \right]$ be the Monte-Carlo DQ estimator for trajectory $i$, where $\hat{Q}^{\rm MC}_{\pi_{1 / 2}}(s_{it}, a) = 2\cdot \mathbb{I}\left\{a_{it} = a \right\} \sum_{t=0}^{\infty} r_{it}$ is the Monte-Carlo Q-function estimate. We then have the aggregate estimator:
\begin{equation}
  \label{eq:meta-dq-mc}
  \hat{\mathrm{ATE}}_{\rm DQ} = \frac{1}{N} \sum_{i=1}^{N} \hat{\rm ATE}_{\rm DQ}^{(i)}.
\end{equation}
By linearity of expectation, the bias of this estimator will be upper bounded by the expected value of the bias bound in \cref{th:total-reward-bias}, where the expectation is taken over MDPs drawn from $\mathcal{M}$.

Regarding the issue of creator-level treatment assignments: As long as the treatment does not alter a video's probability of being shown to the user (i.e., the recommendation system), then from a single viewer's perspective, the marginal treatment probabilities $\Pr\left(a_{it} = a\right)$ remain unchanged, and as a result the bias properties of our various estimators remain unchanged. However, this correlation of treatment assignments across viewers introduces challenges for variance quantification, which we address in \cref{sec:permutation}.\footnote{In fact, a single viewer may view the same video multiple times in the same session, creating dependence in action probabilities across time. In the Appendix, we give a general version of our estimator which accounts for this dependence correctly.}


\subsection{A Doubly Robust DQ Estimator}
\label{sec:doubly-robust}

The Monte-Carlo estimator \cref{eq:meta-dq-mc} is essentially free of assumptions, and is trivial to implement -- it requires only reward observations, and can be computed in a single pass through the dataset. However, this flexibility comes at a cost in terms of variance, as we show in our experiments.

Motivated by the empirical success of doubly robust estimators \cite{jiang2016doubly}, we introduce a doubly robust DQ estimator. Suppose we have an approximation to the $Q$ function, $\hat{Q}^{\rm Reg}$ (typically obtained via regression on pre-experiment data, or on some held out set of viewers, as we do in \cref{sec:tiktok-experiments}), which may be misspecified or biased. Then, we define the estimator
\begin{equation}
  \label{eq:dq-tot-mc}
  \hat{\mathrm{ATE}}_{\rm DQ-DR} = \frac{1}{N} \sum_{i=1}^{N}\sum_{t=0}^{\infty} \left[\hat{Q}_{\pi_{1 / 2}}^{\rm DR}(s_{it}, 1) - \hat{Q}_{\pi_{1 / 2}}^{\rm DR}(s_{it}, 0) \right]
\end{equation}
where we now introduce an approximate $Q$-function and use instead the doubly robust Q-function estimator:
\begin{equation*}
\Qpihat^{\rm DR}(s_{it}, a) = \Qpihat^{\rm Reg}(s_{it}, a) + \frac{\mathbb{I}\left\{a_{it} = a\right\}}{\pi_{1/2}(s_{it}, a)} \left[ \sum_{t'=t}^{\infty}  r_{it'} - \Qpihat^{\rm Reg}(s_{it}, a) \right]
\end{equation*}
where $\pi(s_{it}, a)$ is the probability of selecting action $a$ under the policy $\pi$ and state $s_{it}.$ This estimator has three salient properties. First, $\Qpihat^{\rm DR}$ remains an unbiased estimator of $Q_{\pi_{1/2}}$ regardless of how poorly $\Qpihat^{\rm Reg}$ estimates $\Qpi$, as long as the randomization probabilities $\pi(s, a)$  are known. This fact is key, as in the case of partially observable state, $\Qpihat^{\mathrm{Reg}}$ may be fit heuristically using the observable portion of state, and may in addition perform function approximation or regularization. Second, the introduction of $\Qpihat^{\rm Reg}$ serves as a control variate, typically decreasing the variance of the estimator overall; in the experiments we show that this reduces variance by 95\% relative to Monte-Carlo DQ estimation.

Finally, we consider a subtler issue: robustness to the randomization probabilities $\pi(s, a)$. In particular, in many real situations, only an approximation of the randomization probabilities $\hat{\pi}(s, a)$ is known. This arises most often in observational settings, where $\pi(s, a)$ is simply unknown a priori and must be estimated from data; but is also common in real-world experimental settings, where the complexity of implementation can often result in bugs that skew the realized randomization probability, a discrepancy referred to as ``sample ratio mismatch'' in \cite{kohavi2020trustworthy} (which also provides a wide range of examples under which this can occur). As we will see, the Monte-Carlo estimator \cref{eq:meta-dq-mc} is only unbiased when $\hat{\pi} = \pi$, and when this does not hold the bias can be extremely large in practice.

Doubly robust estimators address this issue by sharply reducing the dependence of the estimator on $\pi(s, a)$. For intuition, consider only the estimator $\Qpihat^{\rm DR}$ described above, but replacing $\pi(s,a)$ with an estimate $\hat{\pi}$. The expectation $\Epi \Qpihat^{\mathrm{DR}}$  is then
\begin{equation*}
  \E_{\pi_{1 / 2}} \Qpihat^{\mathrm{DR}}(s, a) =  \frac{\pi(s,a)}{\hat{\pi}(s, a)} [\Qpi(s, a) - \Qpihat^{\rm Reg}(s, a)] + \Qpihat^{\mathrm{Reg}}(s, a)
\end{equation*}
We can then immediately see that the better $\Qpihat^{\mathrm{Reg}}$ estimates $\Qpi$, the less sensitive $\E_{\pi_{1 / 2}} \Qpihat^{\mathrm{DR}}$ will be to $\hat{\pi}$. At one extreme, if $\Qpi = \Qpihat^{\mathrm{Reg}}$, then the doubly robust estimate will be completely unbiased regardless of $\hat{\pi}$; this (together with the fact that $\Qpihat^{\rm DR}$ is also unbiased if $\hat{\pi} = \pi$) is the well-known ``doubly'' robust property of such estimators.

In our experiments \cref{sec:tiktok-experiments}, we show that that the introduction of this doubly robust technique yields dramatically more precise estimators, both when $\pi$ is known exactly and up to perturbation.

\subsection{Variance Characterization under Randomization Testing}
\label{sec:permutation}
A key challenge in understanding the variance of $\hat{\rm ATE}_{\mathrm{DQ}}$ is that actions are assigned at the streamer level; streamers are shared across viewers; and as result $a_{it}$ can be arbitrarily correlated across viewers.

Whereas exact variance characterizations are difficult in this situation, we propose the use of rerandomization tests to test the sharp null hypothesis $H_0$ that the treatment has identically zero effect on the reward distribution or problem dynamics. One typical approach to characterizing the distribution of the test statistic $\hat{\mathrm{ATE}}_{\mathrm{DQ}}$ under $H_0$ is via simulation, where we randomly redraw the assignments $a(j)$ (the assignment of the streamer $j$), and re-compute the test statistic under the new treatment assignments (holding the outcomes constant, as would be prescribed under the null). This ``re-randomization'' approach yields an exact distribution of the test statistic, but is typically infeasible at Douyin's scale: with on the order of hundreds of millions of viewers, a SQL query for a single realization of the treatment assignments $a(j)$ took around 12 hours to run on Douyin's cluster.


As it turns out, the variance of $\hat{\mathrm{ATE}}_{\mathrm{DQ}}$ under the null hypothesis can actually be computed in {\it linear} time. The Monte-Carlo estimator \cref{eq:dq-tot-mc} can actually be written as a sum over streamer-level terms, which eventually leads to an {\it exact} characterization of the permutation testing variance under $H_0$. To see this, note that the estimator \cref{eq:dq-tot-mc} is equivalent to
\begin{equation*}
  \hat{{\rm ATE}}_{\rm DQ} = \sum_{j=1}^{M} \left(\frac{\mathbb{I}\left\{a(j) = 1\right\}}{\Pr(a(j) = 1)} - \frac{\mathbb{I}\left\{a(j) = 0\right\}}{\Pr(a(j) = 0)}\right) \left(\frac{1}{N} \sum_{i=1}^{N}\sum_{t=t_{ij}}^{\infty} r_{it} \right)
\end{equation*}
where $t_{ij}$ is the time period in which viewer $i$ viewed streamer $j$. Each of these summands is independent since each assignment $a(j)$ is independent from each other, and as a result we need only to characterize the variance of each summand and sum them in order to compute the variance of ${\rm ATE}_{\rm DQ}$. To conclude this line of thought, note that each summand is an affine function of a Bernoulli random variable, which has closed form expression for variance:
\begin{equation*}
  {\rm Var}(\hat{\rm ATE}_{\rm DQ}) = \sum_{j=1}^{M} \Pr(a(j) = 0) \Pr(a(j)=1) \left(\frac{1}{\Pr(a(j)=1)} - \frac{1}{\Pr(a(j)=0)}\right)^2 \left(\frac{1}{N} \sum_{i=1}^{N}\sum_{t=t_{ij}}^{\infty} r_{it} \right)^2
  \end{equation*}
%

Using this variance, we can then perform a hypothesis test, either assuming normality (reasonable in practice) or a Chebyshev-type bound. In the experiments, we show that this test is very high-powered in practice, detecting effect sizes of $0.2\%$ in a day, and achieves nearly nominal coverage.

While this gives a flavor of why linear-time permutation testing should be possible, analogous results for $p \neq \frac{1}{2}$ are much more involved. In those cases, naively pursuing the same approach has complexity $O(M^{4})$; while a smarter, exact approach is $O(M^{2})$ and only an approximation (albeit empirically a very good one) is possible in $O(M)$ time. We provide details of this implementation in the Appendix, and show in \cref{sec:tiktok-experiments} that this approach empirically provides nearly exact coverage.

\subsection{General Data Collection Policies}\label{sec:data-collection}

Finally, we provide some examples of how to generalize DQ estimation to variety of data collection scenarios. For exposition we have discussed DQ estimation in the specific setting of treatment effect estimation from a A/B test with treatment probability $p=1 / 2$, which yields the cleanest results. Even modifying $p$, however, results in a non-intuitive generalization. In full generality, we collect data under some data-collecting (or, commonly, ``behavioral'') policy $\pi_{\rm data}$, and we would like to evaluate some new target policy $\pi_{\rm new}$ (or, in some cases, a set of target policies). A range of common scenarios fall under this umbrella:
\vspace{-0.5em}
\begin{itemize}
  \item A/B testing with $p \neq 1 / 2$. Let $\pi_{\rm data} = \mathrm{Bern}(p)$, and we treat $\pi_0, \pi_{1}$ individually as target policies.
  \item RCTs with multiple treatment groups $1 \ldots K$. Let $\pi_{\rm data}(s)$ be a distribution over $1 \ldots K$, and we have target policies $\pi_{1} \ldots \pi_{K}$.
  \item Evaluating a new policy from logged data. Let $\pi_{\rm data}$ be the logging policy, and let $\pi_{\rm new}$ be an arbitrary new policy (with the same support over $\mathcal{A}$ in each state as $\pi_{\rm data}$).
\end{itemize}

On top of this, we also deal with the issue of actions possibly being dependent across time. That is, defining the history $\mathcal{H}_t = \{s_{t'}, a_{t'}, r_{t'}\}_{t' < t}$ we now allow $\pi_{\rm data}, \pi_{\rm new }$ to depend arbitrarily on this history. We denote this dependence as $\pi(s, a | \mathcal{H}_t)$. The Taylor expansion in \cref{th:taylor-fundamental} provides a DQ analog for each of these generalizations. To be precise: the idealized first-order expansion becomes:
\begin{equation*}
J_{\pi_{\rm new}}^{\mathrm{DQ}} = \sum_{t=0}^{\infty} \E_{\pi_{\rm data}} \left[ \E_{a \sim \pi_{\rm new}}[r(s_t, a_t) + Q_{\pi_{\rm data}}(s_t,a_t; r_{\pi_{\rm new}})] - Q_{\pi_{\rm data}}(s_t, a_t; r_{\pi_{\rm new}})\right]
\end{equation*}
Notably, this resulting Monte-Carlo estimator is more complicated to compute, with multiple importance sampling weights, and this introduces further challenges (and opportunities) for each of the previous extensions to DQ. Our implementation at Douyin overcomes these challenges and we defer details to Appendix.
%


  \section{Experiments}
\label{sec:tiktok-experiments}
We now provide empirical results comparing the DQ estimator (and various improvements) on simulations motivated by our particular application at Douyin. Our experimental results aim to address the following research questions:
\begin{itemize}
  \item {\bf RQ1.} Does DQ provide more accurate treatment effect estimates than existing alternatives?
  \item {\bf RQ2.} Do our hypothesis testing approaches have correct coverage and sufficient power to detect small treatment effects at realistic timescales?
  \item {\bf RQ3.} How robust are these results under perturbations to $\pi$?
\end{itemize}


\subsection{Experimental setting}
A typical source of interference at Douyin is in experiments where, due to implementation constraints, Douyin must randomize at the creator (or streamer) level, rather than the viewer level. We take the streaming setting discussed in the introduction as a concrete example: Douyin has a new feature which will allow creators to deliver livestreams in high definition, but as a creator-facing feature, experiments with this feature must be conducted at the creator level. In this setting, HD streaming is almost guaranteed to increase viewer watch time on a per-video basis, but the effect on total watch time is likely to be overestimated by Naive estimation due to cannibalization of viewer attention. Furthermore, the ATE  can plausibly be negative because the feature will tend to consume viewers' battery and data budgets much more quickly.


While we have tested the DQ estimator on internal simulators at Douyin, as well as in real-world experimental settings, here we primarily present results from a simulation calibrated to Douyin user data. This approach ensures privacy protection while maintaining the consistency of the qualitative outcomes. We represent state as $s_{it} = (w_t, u_{i}, v_{it})$, where $w_t$ represents the amount of time the viewer has spent watching videos so far, $u_{i} \in \mathbb{R}^{d}$  (here we choose $d = 5$) is a latent vector representing the viewer's preferences, and $v_{it} \in \mathbb{R}^{d}$ is a latent vector representing the video shown at time $t$ to viewer $i$, such that $\left\langle u_{i}, v_{it} \right\rangle$ measures the affinity of viewer $i$ for the $t^{\rm th}$ video shown. We assume that $u, v$ are invisible to the agent, although in reality the agent may have access to some noisy version of these vectors.

The viewer's counterfactual watch time under control is then generated as $w_{it} \sim \mathrm{Exp}(1 / (k u_{i}^\top v_{it}))$, for some scalar problem parameter $k$. Finally, the viewer's actual watch time is $r_{it}= w_{it} (1 + \tau^{*} a_{it})$; i.e. it increases by a multiplicative factor of $\tau^*$ under treatment $a_{it}=1$. The viewer's state is either their cumulative watch time $s_t = \sum_{t'=1}^{t-1} r_{it'}$, or the terminal state (denoted by $s_{\rm abs}$) which indicates that the viewer has left the platform. At each epoch the viewer has a probability of transitioning  to the terminal state $s_{\rm abs}$ (i.e., ``leaving the platform'') determined by $s_t$ as $\mathbb{P}\left(s_{t+1} = s_{\rm abs} | s_t\right) = \frac{\exp(s_t)}{\alpha + \exp(s_t)}$ where $\alpha$ is a problem parameter. We calibrated the parameters $k, \alpha$ to reflect the actual distribution of watch durations and number of videos watched in Douyin's viewer population; we omit the actual values here.

\subsection{Algorithms}
We will compare the performance of both Monte-Carlo and Doubly Robust DQ variants. For Doubly Robust DQ, we construct a simple regression estimator for $\Qpihat^{\rm Reg}$: we hold out the first 1000 viewers observed under the experiment policy, and fit via least squares a regression model of the form $\hat{Q}_{\pi_{1 / 2}}^{\rm Reg}((w, u, v), a) = \beta_0 + \beta w$ with respect to the parameters $\beta_0, \beta$.

As mentioned, the only alternative being applied in practice is Naive estimation. Although not used in practice, unbiased OPE is possible via importance sampling at the level of the entire trajectory, although as one might expect the variance of such an approach is astronomical. Note that each of these approaches work for partially observable state, and are compatible with the streamer-level randomization required in our scenario. To be precise, the typical Naive estimator is $ \hat{\mathrm{ATE}}_{\rm Naive} = \frac{1}{N} \sum_{i=1}^{N} \sum_{t=0}^{\infty} \left[\hat{r}_{\rm IS}(s_{it}, 1) - \hat{r}_{\rm IS}(s_{it}, 0) \right]$
where $\hat{r}_{\rm IS}$ is the importance sampling estimator $\hat{r}_{\mathrm{IS}}(s_{it}, a) = \frac{\mathbb{I}\left\{a_{it} = a\right\}}{\pi(s_{it}, a)} r_{it}$
and the typical, stepwise importance sampling OPE estimator is
\begin{align*}
  \hat{\mathrm{ATE}}_{\rm OPE} = \frac{1}{N} \sum_{i=1}^{N} \sum_{t=0}^{\infty} \left[\hat{r}_{\rm IS-OPE}(s_{it}, 1) - \hat{r}_{\rm IS-OPE}(s_{it}, 0) \right]
\end{align*}
where $\hat{r}_{\rm IS-OPE}(s_{it}, a_{it})$ is an importance sampling estimator for the expected reward in the $t^{\mathrm{th}}$ period under policy $\pi_{a}$, $\E_{\pi_{a}}[r(s_{it},a_{it})]$. Here we implement the ``stepwise'' importance sampling estimator \cite{jiang2015doubly}, which improves slightly over naive trajectory-level sampling: $\hat{r}_{\mathrm{IS-OPE}}(s_{it}, a) = \left(\prod_{t'=0}^{t} \frac{\mathbb{I}\left\{a_{it'} = a \right\}}{\pi(s_{it'}, a)}\right) r_{it}$.

The doubly robust approaches used to construct \cref{eq:dq-tot-mc} can also be applied to refine the Naive and OPE estimators to provide a more refined baselines. We implemented the state-of-the-art methods from \citep{kallus2020double}. It is worth noting, however, that the DR variants will retain the same bias properties; and therefore one will expect Naive-DR to also exhibit large bias, and for OPE-DR to remain unbiased but to still have unreasonably high variance. We also experiment with these estimators, but defer their derivation to the Appendix.

\subsection{Overall Performance Comparison (RQ1)}

\paragraph{DQ provides much more accurate estimates of ATE than any alternative}

We first consider the root mean-squared error of each estimator in \cref{fig:tiktok-rmse}, as well as a breakdown into bias and variance in \cref{tab:tiktok-distr}. First, we note that interference leads to heavy bias in the Naive estimator: the sign of the Naive estimator is always wrong, a manifestation of the intuition that the intervention is always myopically positive, but can be negative in the long run. Not only is the sign wrong, but the magnitude of the error is large: the Naive estimate is around -800\% of the actual ATE in all cases. Next, we turn to the OPE variants. OPE is indeed unbiased, but the variance is so large that its RMSE is by far the largest out of any of the estimators considered, at all timescales considered. This holds for both doubly robust and vanilla versions of the estimator. Neither Naive nor OPE algorithms achieve relative RMSE below 100\%.

Finally turning to DQ, we see that Monte-Carlo DQ achieves much lower RMSE than any Naive or OPE variant, for sufficiently large effect sizes and sufficient observations $N$. However, its variance remains large even at $10^{8}$ viewers, primarily due to the limited number of streamers observed, and it also fails to achieve relative RMSE below 100\% on the effect sizes measured. Doubly Robust DQ, on the other hand, provides striking performance gains: error is reduced by at least 97\% compared to Monte-Carlo DQ in all instances, and by up to 99\% compared to Naive estimation. By $N = 10^{8}$ (on the order of one day's worth of viewer activity), DQ-DR already achieves relative RMSE below 100\% of the treatment effect on all treatment effects of magnitude greater than 0.17\%.


\begin{figure}
  \centering
  \begin{adjustbox}{width=\textwidth}
   \input{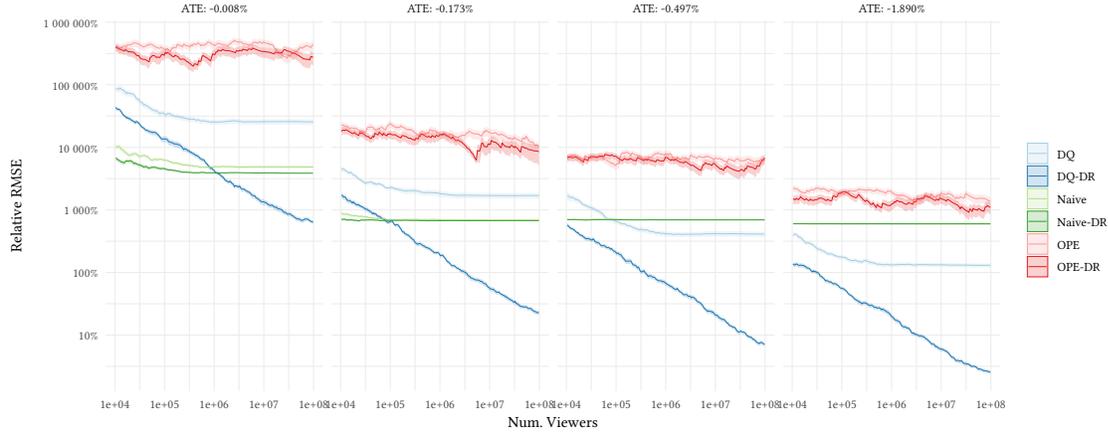}
  \end{adjustbox}
  \caption{Relative RMSE of each estimator vs. $N$, i.e., the number of viewers observed, for various effect sizes. Relative RMSE is measured as $\sqrt{\sum_{i=1}^{T} (\hat{\mathrm{ATE}_i} - \mathrm{ATE})^2} / (\sqrt{T} \mathrm{ATE})$ where $\mathrm{ATE}$ is the true treatment effect and $\hat{\mathrm{ATE}}_{i}$ is the estimator's output for the $i$-th experiment over $T=100$ seeds. Error bars indicate standard errors over $T$ seeds.  See \cref{tab:tiktok-distr} for a breakdown of bias vs. variance for all estimators.
  }
  \label{fig:tiktok-rmse}
\end{figure}
\begin{table}
  \centering
\begin{tabular}[t]{rrrrrrrrr}
\toprule
Actual ATE (\%) & -0.01        &       & -0.17         &       & -0.50        &       & -1.89        &  \\
& Mean         & SD    & Mean         & SD    & Mean         & SD    & Mean         & SD           \\
\midrule
DQ         &  -0.05 (0.00) & 2.01  & -0.15 (0.00) & 2.95  & -0.49 (0.00) & 2.03  & -1.92 (0.00) & 2.44         \\
DQ-DR      & -0.05 (0.00) & 0.03  & -0.15 (0.00) & 0.03  & -0.49 (0.00) & 0.03  & -1.92 (0.00) & 0.04         \\
Naive      & 0.30 (0.00)  & 0.17  & 0.99 (0.00)  & 0.25  & 2.95 (0.00)  & 0.17  & 9.43 (0.00)  & 0.22         \\
Naive-DR   & 0.30 (0.00)  & 0.01  & 0.99 (0.00)  & 0.01  & 2.95 (0.00)  & 0.01  & 9.43 (0.00)  & 0.01         \\
OPE        & -0.01 (0.00) & 35.42 & -0.17 (0.00)  & 17.24 & -0.50 (0.00) & 32.01 & -1.89 (0.00)  & 26.24        \\
OPE-DR     & -0.01 (0.00)  & 22.05 & -0.17 (0.00)  & 14.82 & -0.50 (0.00) & 33.32 & -1.89 (0.00) & 20.87        \\
\bottomrule
\end{tabular}
  \caption{Bias and variance of each estimator, at $N=10^{8}$ viewers observed, as a \% of the baseline outcome $J_{\pi_0}$, for problem instances with various effect sizes (i.e., $-0.01\%$, $-0.17\%$, $-0.50\%$, and $-1.89\%$). Parenthetical quantities are standard errors of the mean estimation. Doubly Robust DQ achieves nearly zero bias and reduces standard deviation of DQ by about 97\%.}
  \label{tab:tiktok-distr}
\end{table}

\subsection{Power and Coverage of Hypothesis Tests (RQ2) }

\paragraph{Rerandomization testing detects effect sizes of 0.2\% using a day of data} We now consider the problem of hypothesis testing under DQ estimation. \cref{fig:tiktok-power} illustrates of the effectiveness of our rerandomization tests as outlined in \cref{sec:permutation}, which generalizes straightforwardly to Doubly Robust DQ. When analyzing a sample size of $10^{8}$ viewers, which is on the same order as about one day of traffic on Douyin, the Doubly Robust DQ estimator demonstrates its robustness by consistently achieving 80\% power for detecting effect sizes as small as 0.15\%. In addition, the test successfully adheres to the target false positive rate of 10\% when the true treatment effect is zero.
\begin{figure}[htbp]
  \centering
    \includegraphics[width=0.8\linewidth]{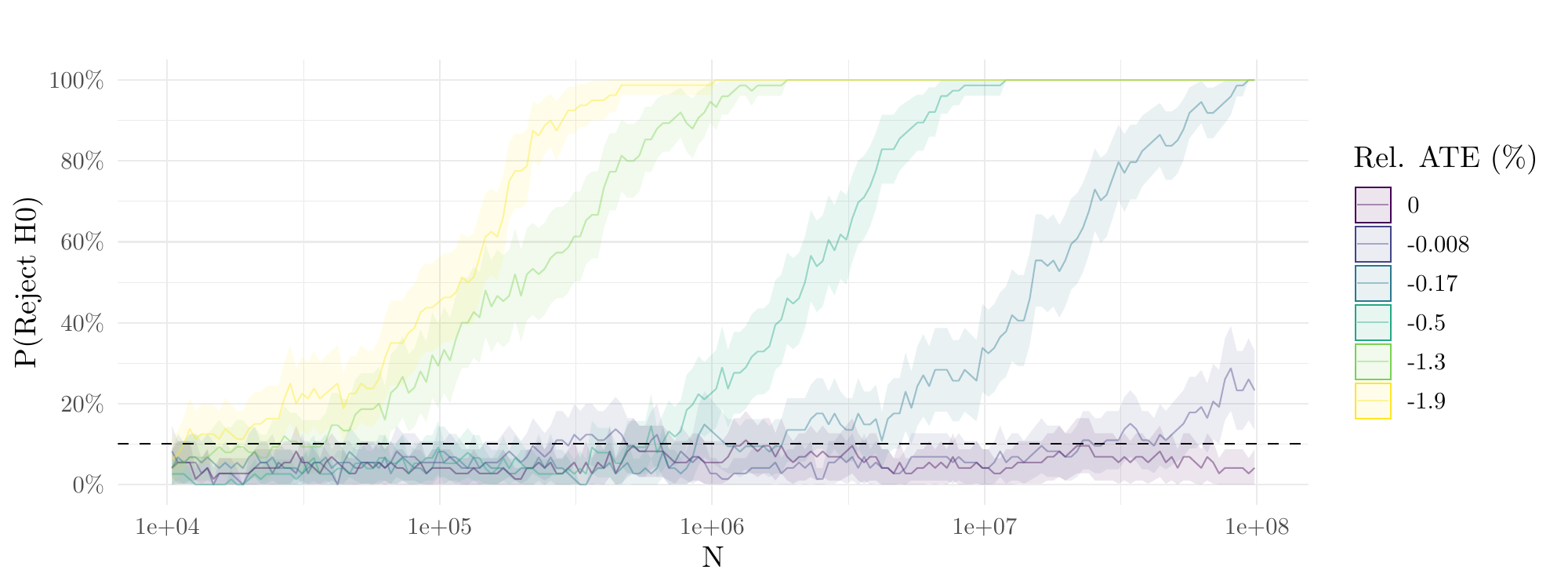}
    \caption{$\mathbb{P}(\text{Reject } H_0)$ (i.e., power) vs. number of viewers sampled for the proposed approximate permutation test, for the doubly robust estimator, for various effect sizes, at a 90\% confidence level. Error bars show standard errors over 100 random seeds. At $10^{8}$ viewers, roughly one day of traffic at Douyin, Doubly Robust DQ reliably achieves 80\% power for effect sizes as small as 0.15\%.}
  \label{fig:tiktok-power}
\end{figure}

\subsection{Robustness to Misspecification (RQ3)}

\paragraph{Doubly Robust} Finally, we consider what happens to each estimator when the treatment assignment probabilities are misspecified; i.e., unbeknownst to the algorithm, rather than the nominal probability $p = 1 / 2$, the environment actually executes an experiment with $p = 1 / 2 + 1 / 1000$. Robustness to such misspecification is a first-order concern at Douyin, where for example the intervention can cause a regression for a tiny proportion of streamers, who then no longer appear in viewers' feeds. \cite{kohavi2020trustworthy} describes a number of other scenarios in which such perturbations occur.

\cref{tbl:tiktok-perturb} shows results for all estimators in this setting. Here, we see that the bias and variance issues continue to plague Naive and OPE, as expected. However, Monte-Carlo DQ now becomes heavily biased as well, estimating a treatment effect of the wrong sign and of substantially larger magnitude than the actual treatment effect, despite the small change in the treatment probabilities. In contrast, the bias of Doubly Robust DQ is almost unchanged, remaining at most a few percent of the ATE. This is despite the fact that the regression estimator we use for $\hat{Q}^{\mathrm{Reg}}_{\pi_{1 / 2}}$ is extremely crude, and extremely poorly specified, speaking to the ease with which doubly robust estimators can be constucted, and to the outsize impact of doing so.

\begin{table}[htbp]
  \centering
\begin{tabular}[t]{rrrrrrrrr}
\toprule
Actual ATE (\%) & -0.01        &       & -0.17        &       & -0.50        &       & -1.89        &       \\
           & Mean         & SD    & Mean         & SD    & Mean         & SD    & Mean         & SD    \\
\midrule
DQ         & 4.98 (0.24)  & 2.20  & 4.49 (0.28)  & 2.56  & 4.43 (0.28)  & 2.52  & 2.70 (0.26)  & 2.37  \\
DQ-DR      & -0.06 (0.00) & 0.04  & -0.12 (0.00) & 0.03  & -0.48 (0.00) & 0.04  & -1.90 (0.00) & 0.04  \\
Naive      & 0.71 (0.02)  & 0.18  & 1.38 (0.02)  & 0.22  & 3.35 (0.02)  & 0.21  & 9.82 (0.02)  & 0.21  \\
Naive-DR   & 0.30 (0.00)  & 0.01  & 0.99 (0.00)  & 0.00  & 2.95 (0.00)  & 0.01  & 9.43 (0.00)  & 0.01  \\
OPE        & 1.22 (3.11)  & 28.80 & 5.94 (3.07)  & 28.13 & -0.40 (3.29) & 30.14 & 3.49 (2.59)  & 23.64 \\
OPE-DR     & -1.24 (3.32) & 26.81 & 4.57 (3.37)  & 26.52 & -1.69 (1.87) & 14.87 & -8.18 (2.31) & 18.73 \\
\bottomrule
\end{tabular}
  \caption{Estimation results for all estimators and various effects sizes ($-0.01\%$, $-0.17\%$, $-0.50\%$, $-1.89\%$), where the estimator uses $p=1 / 2$ but in fact the policy executes $p = 1 / 2 + 1 / 1000$. Naive and OPE estimators continue to display high bias / high variance respectively. Monte-Carlo DQ is now significantly biased, while Doubly Robust DQ remains essentially unperturbed.}
  \label{tbl:tiktok-perturb}
\end{table}


\bibliographystyle{plain}
\bibliography{bib}

\begin{thebibliography}{10}

\bibitem{bajari2021multiple}
Patrick Bajari, Brian Burdick, Guido~W Imbens, Lorenzo Masoero, James McQueen,
  Thomas Richardson, and Ido~M Rosen.
\newblock Multiple randomization designs.
\newblock {\em arXiv preprint arXiv:2112.13495}, December 2021.

\bibitem{blake2014marketplace}
Thomas Blake and Dominic Coey.
\newblock Why marketplace experimentation is harder than it seems: The role of
  test-control interference.
\newblock In {\em Proc. of the fifteenth ACM conf. on Economics and
  computation}, {{EC}} '14, pages 567--582, {New York, NY, USA}, June 2014.
  {Association for Computing Machinery}.

\bibitem{bojinov2019time}
Iavor Bojinov and Neil Shephard.
\newblock Time series experiments and causal estimands: exact randomization
  tests and trading.
\newblock {\em Journal of the American Statistical Association},
  114(528):1665--1682, 2019.

\bibitem{bojinov2022design}
Iavor Bojinov, David Simchi-Levi, and Jinglong Zhao.
\newblock Design and analysis of switchback experiments.
\newblock {\em Management Science}, 2022.

\bibitem{chamandyExperimentationRidesharingMarketplace2016}
Nicholas Chamandy.
\newblock Experimentation in a {{Ridesharing Marketplace}}.
\newblock
  https://eng.lyft.com/experimentation-in-a-ridesharing-marketplace-b39db027a66e,
  September 2016.

\bibitem{deaton2018understanding}
Angus Deaton and Nancy Cartwright.
\newblock Understanding and misunderstanding randomized controlled trials.
\newblock {\em Social science \& medicine}, 210:2--21, 2018.

\bibitem{farias2022markovian}
Vivek Farias, Andrew Li, Tianyi Peng, and Andrew Zheng.
\newblock Markovian interference in experiments.
\newblock {\em Advances in Neural Information Processing Systems}, 35:535--549,
  2022.

\bibitem{gilotte2018offline}
Alexandre Gilotte, Cl{\'e}ment Calauz{\`e}nes, Thomas Nedelec, Alexandre
  Abraham, and Simon Doll{\'e}.
\newblock Offline a/b testing for recommender systems.
\newblock In {\em Proceedings of the Eleventh ACM International Conference on
  Web Search and Data Mining}, pages 198--206, 2018.

\bibitem{glynn2020adaptive}
Peter~W Glynn, Ramesh Johari, and Mohammad Rasouli.
\newblock Adaptive experimental design with temporal interference: A maximum
  likelihood approach.
\newblock {\em Advances in Neural Information Processing Systems},
  33:15054--15064, 2020.

\bibitem{hohnhold2015focus}
Henning Hohnhold, Deirdre O'Brien, and Diane Tang.
\newblock Focus on the long-term: It's better for users and business.
\newblock 2015.

\bibitem{holtzLimitingBiasTestControl2020}
David Holtz and Sinan Aral.
\newblock Limiting {{Bias}} from {{Test-Control Interference}} in {{Online
  Marketplace Experiments}}.
\newblock {{SSRN Scholarly Paper}} 3583596, {Social Science Research Network},
  {Rochester, NY}, April 2020.

\bibitem{holtz2018limiting}
David~Michael Holtz.
\newblock {\em Limiting bias from test-control interference in online
  marketplace experiments}.
\newblock PhD thesis, Massachusetts Institute of Technology, 2018.

\bibitem{jiang2016doubly}
Nan Jiang and Lihong Li.
\newblock Doubly robust off-policy value evaluation for reinforcement learning.
\newblock In {\em Intl. Conf. on Machine Learning}, pages 652--661. PMLR, 2016.

\bibitem{jiang2015doubly}
Nan Jiang and Lihong Li.
\newblock Doubly robust off-policy value evaluation for reinforcement learning.
\newblock In {\em International Conference on Machine Learning}, pages
  652--661. PMLR, 2016.

\bibitem{johari2017peeking}
Ramesh Johari, Pete Koomen, Leonid Pekelis, and David Walsh.
\newblock Peeking at a/b tests: Why it matters, and what to do about it.
\newblock In {\em Proc. of the 23rd ACM SIGKDD Intl. Conf. on Knowledge
  Discovery and Data Mining}, pages 1517--1525, 2017.

\bibitem{johari2022experimental}
Ramesh Johari, Hannah Li, Inessa Liskovich, and Gabriel~Y Weintraub.
\newblock Experimental design in two-sided platforms: An analysis of bias.
\newblock {\em Management Science}, 2022.

\bibitem{kallus2020double}
Nathan Kallus and Masatoshi Uehara.
\newblock Double reinforcement learning for efficient off-policy evaluation in
  markov decision processes.
\newblock {\em J. Mach. Learn. Res.}, 21(167):1--63, 2020.

\bibitem{kallus2022efficiently}
Nathan Kallus and Masatoshi Uehara.
\newblock Efficiently breaking the curse of horizon in off-policy evaluation
  with double reinforcement learning.
\newblock {\em Operations Research}, 2022.

\bibitem{king2017designing}
Rochelle King, Elizabeth~F Churchill, and Caitlin Tan.
\newblock {\em Designing with data: Improving the user experience with A/B
  testing}.
\newblock " O'Reilly Media, Inc.", 2017.

\bibitem{kirnChallengesExperimentation2022}
John Kirn.
\newblock Challenges in {{Experimentation}}.
\newblock https://eng.lyft.com/challenges-in-experimentation-be9ab98a7ef4,
  April 2022.

\bibitem{kohavi2020trustworthy}
Ron Kohavi, Diane Tang, and Ya~Xu.
\newblock {\em Trustworthy online controlled experiments: A practical guide to
  a/b testing}.
\newblock Cambridge University Press, 2020.

\bibitem{li2022interference}
Hannah Li, Geng Zhao, Ramesh Johari, and Gabriel~Y Weintraub.
\newblock Interference, bias, and variance in two-sided marketplace
  experimentation: Guidance for platforms.
\newblock In {\em Proceedings of the ACM Web Conference 2022}, {{WWW}} '22,
  pages 182--192, {New York, NY, USA}, April 2022. {Association for Computing
  Machinery}.

\bibitem{liu2018breaking}
Qiang Liu, Lihong Li, Ziyang Tang, and Dengyong Zhou.
\newblock Breaking the curse of horizon: Infinite-horizon off-policy
  estimation.
\newblock {\em Adv. in Neural Information Processing Systems}, 31, 2018.

\bibitem{lucking1999using}
David Lucking-Reiley.
\newblock Using field experiments to test equivalence between auction formats:
  Magic on the internet.
\newblock {\em American Economic Review}, 89(5):1063--1080, December 1999.

\bibitem{pouget2019variance}
Jean Pouget-Abadie, Kevin Aydin, Warren Schudy, Kay Brodersen, and Vahab
  Mirrokni.
\newblock Variance reduction in bipartite experiments through correlation
  clustering.
\newblock {\em Adv. in Neural Information Processing Systems}, 32, 2019.

\bibitem{precup2001off}
Doina Precup, Richard~S Sutton, and Sanjoy Dasgupta.
\newblock Off-policy temporal-difference learning with function approximation.
\newblock In {\em ICML}, pages 417--424, 2001.

\bibitem{schulman2015trust}
John Schulman, Sergey Levine, Pieter Abbeel, Michael Jordan, and Philipp
  Moritz.
\newblock Trust region policy optimization.
\newblock In {\em Intl. conf. on machine learning}, pages 1889--1897. PMLR,
  2015.

\bibitem{siroker2015b}
Dan Siroker and Pete Koomen.
\newblock {\em A/B testing: The most powerful way to turn clicks into
  customers}.
\newblock John Wiley \& Sons, 2015.

\bibitem{stolberg2004randomized}
Harald~O Stolberg, Geoffrey Norman, and Isabelle Trop.
\newblock Randomized controlled trials.
\newblock {\em AJR Am J Roentgenol}, 183(6):1539--44, 2004.

\bibitem{struchiner1990behaviour}
Claudio~J Struchiner, Mary~Elizabeth Halloran, James~M Robins, and Andrew
  Spielman.
\newblock The behaviour of common measures of association used to assess a
  vaccination programme under complex disease transmission patterns—a
  computer simulation study of malaria vaccines.
\newblock {\em Intl. journal of epidemiology}, 19(1):187--196, 1990.

\bibitem{sutton2008convergent}
Richard~S Sutton, Csaba Szepesv{\'a}ri, and Hamid~Reza Maei.
\newblock A convergent o (n) algorithm for off-policy temporal-difference
  learning with linear function approximation.
\newblock {\em Adv. in neural information processing systems},
  21(21):1609--1616, 2008.

\bibitem{talbot1995effect}
M~Talbot, AD~Milner, MAE Nutkins, and JR~Law.
\newblock Effect of interference between plots on yield performance in crop
  variety trials.
\newblock {\em The J. of Agricultural Science}, 124(3):335--342, 1995.

\bibitem{tchetgen2012causal}
Eric J~Tchetgen Tchetgen and Tyler~J VanderWeele.
\newblock On causal inference in the presence of interference.
\newblock {\em Statistical methods in medical research}, 21(1):55--75, 2012.

\bibitem{thomas2016data}
Philip Thomas and Emma Brunskill.
\newblock Data-efficient off-policy policy evaluation for reinforcement
  learning.
\newblock In {\em Intl. Conf. on Machine Learning}, pages 2139--2148. PMLR,
  2016.

\bibitem{thomas2015high}
Philip Thomas, Georgios Theocharous, and Mohammad Ghavamzadeh.
\newblock High-confidence off-policy evaluation.
\newblock In {\em Proc. of the AAAI Conf. on Artificial Intelligence},
  volume~29, 2015.

\bibitem{uber2019experimentation}
Vivek Trehan.
\newblock Marketplace experimentation.
\newblock \url{https://www.youtube.com/watch?v=IR000RqN7pw}, 2019.
\newblock Accessed: 2023-03-24.

\bibitem{vaver2011measuring}
Jon Vaver and Jim Koehler.
\newblock Measuring ad effectiveness using geo experiments.
\newblock 2011.

\bibitem{walker2014design}
Dylan Walker and Lev Muchnik.
\newblock Design of randomized experiments in networks.
\newblock {\em Proc. of the IEEE}, 102(12):1940--1951, 2014.

\bibitem{zigler2021bipartite}
Corwin~M Zigler and Georgia Papadogeorgou.
\newblock Bipartite causal inference with interference.
\newblock {\em Statistical science: a review journal of the Institute of
  Mathematical Statistics}, 36(1):109, 2021.

\end{thebibliography}

\end{document}